\newcommand\PS{\mathbb{Q}_{\mathrm{S}}}
\newcommand{\DF}{\mathrm{DF}}
\newcommand{\UF}{\mathrm{UF}}
\newcommand{\PV}{\mathrm{PV}}
\newcommand{\bid}{\mathrm{bid}}
\newcommand{\ask}{\mathrm{ask}}
\newcommand{\bidarg}{\mathrm{bid}(\lambda, \gamma)}
\newcommand{\askarg}{\mathrm{ask}(\lambda, \gamma)}
\newcommand{\bidargprot}{\mathrm{bid}^{\text{prot}}(\lambda, \gamma)}
\newcommand{\askargprem}{\mathrm{ask}^{\text{prem}}(\lambda, \gamma)}
\newcommand{\bidargprem}{\mathrm{bid}^{\text{prem}}(\lambda, \gamma)}
\newcommand{\askargprot}{\mathrm{ask}^{\text{prot}}(\lambda, \gamma)}
\newcommand{\lgd}{\mathrm{LGD}}
\newcommand{\overbar}[1]{\mkern 1.5mu\overline{\mkern-1.5mu#1\mkern-1.5mu}\mkern 1.5mu}
\newtheorem{Assumption}{Assumption}
\newtheorem{lemma}{Lemma}
\newtheorem{theorem}{Theorem}
\date{14 April 2021}
\begin{document}


\title{From bid-ask credit default swap quotes to risk-neutral default probabilities using distorted expectations}

\author[,1,2]{Matteo Michielon\thanks{matteo.michielon@nl.abnamro.com (corresponding author).}}
\author[,2]{Asma Khedher\thanks{a.khedher@uva.nl.}}
\author[,2,3]{Peter Spreij\thanks{p.j.c.spreij@uva.nl.}}
\affil[1]{\small Quantitative Analysis and Quantitative Development, ABN AMRO Bank N.V., Gustav Mahlerlaan 10, 1082 PP Amsterdam, The Netherlands.}
\affil[2]{\small Korteweg-de Vries Institute for Mathematics, University of Amsterdam, Science Park 105-107, 1098 XG Amsterdam, The Netherlands}
\affil[3]{\small Institute for Mathematics, Astrophysics and Particle Physics, Radboud University Nijmegen, Huygens building, Heyendaalseweg 135, 6525 AJ Nijmegen, The Netherlands}

\maketitle

\begin{abstract}
Risk-neutral default probabilities can be implied from credit default swap (CDS) market quotes. In practice, mid CDS quotes are used as inputs, as their risk-neutral counterparts are not observable. We show how to imply risk-neutral default probabilities from bid and ask quotes directly by means of formulating the CDS calibration problem to bid and ask market quotes within the conic finance framework. Assuming the risk-neutral distribution of the default time to be driven by a Poisson process we prove, under mild liquidity-related assumptions, that the calibration problem admits a unique solution that also allows to jointly calculate the implied liquidity of the market.
\end{abstract}

\begin{doublespace}

\section{Introduction}
Risk-neutral default probabilities play a crucial role in modeling (counterparty) credit risk, as for instance in valuation adjustment calculations, and an approach that is often followed is that of computing them starting from credit default swap (CDS) market quotes. This article aims to relate risk-neutral default probabilities and CDS quotes in a two-price economy within the conic finance paradigm by means of providing a methodology to extract the former from bid and ask quotes directly, i.e., without relying on any mid price approximation. Before explaining our contribution in detail, we first provide a review of the relevant literature concerning bid-ask pricing and conic finance.

Bid-ask pricing can be modeled, in a consistent manner with risk-neutral valuation, in different manners. A possible way to do so is that of transforming the risk-neutral measure via appropriate concave distortion functions as per \cite{cherny2009}. This approach, known as conic finance and introduced in \cite{cherny}, is based on the idea of modeling illiquid markets as abstract entities accepting, at zero cost, a convex cone of random variables containing the non-negative cashflows. By balancing risks and rewards to assess the ``quality'' or ``expected performance'' of contingent claims via the concept of \emph{index of acceptability}, this framework allows to use Choquet expectations \cite{choquet1953} as building blocks for computing bid and ask prices. 

The former conic modeling framework, which has triggered extensive research and of which several applications are available in \cite{bookConicFinance}, employs a static notion of index of acceptability, which allows to choose amongst cashflows, at the valuation date, based on their (cumulative) expected terminal value. This idea has been further extended by \cite{dynamicAccIndex} and later by \cite{biaginiNadal}, amongst others, to a dynamic setup, where  \emph{dynamic indices of acceptability} are defined in a multi-period setting. Dynamic acceptability indices allow to re-assess the initial classification of the traded cashflows on the basis of the latest information available, consistently over time, in the sense that future preferences are conforming with the current ones. This has lead to the possibility of pricing and hedging in a \emph{dynamic conic finance} framework for finite probability spaces in a discrete-time setting  as in \cite{dynamicConicFinance}, where time-dependent bid and ask prices of contingent claims, potentially including dividends and transaction costs, are expressed in terms of dynamic indices of acceptability, and where a (dynamically-consistent) version of the First Fundamental Theorem of Asset Pricing is provided in terms of no-good-deal conditions. For a unified framework for the time-consistency between dynamic risk measures and dynamic performance measures in discrete time, see \cite{unified}, while a survey concerning the time-consistency property of dynamic risk and performance measures is available in \cite{survey}.

Security prices do not only depend on the direction of the transaction but also on the size of the order, and different approaches can be considered to include this additional liquidity charge in the relevant pricing equations. \cite{nadal} introduces, in continuous time, an approach providing dynamic bid and ask processes for contingent claims which include both the aforementioned liquidity effect, as well as transaction costs. By replacing scale-invariance with sub-scale-invariance \cite{rosazza} develop a dynamic framework, in a continuous-time setup and given a general probability space, that also captures this additional liquidity cost and that allows to value financial securities in terms of \emph{g-expectations} \cite{peng} (comparisons between the definitions of $g$- and Choquet expectations are available in \cite{comparison} and \cite{comparison2}). Again on general probability spaces, both liquidity and transaction costs can be included within a dynamic conic finance approach where pricing is based on $g$-expectations as in \cite{dynamicConicFinance2}.

Within the conic finance paradigm different studies deal with credit-related topics. \cite{eberlein2012} show that, if assets and liabilities are marked at the bid and at the ask, respectively, then the potential accounting profitability of a firm induced by its own credit quality deterioration is eliminated. These ideas are further applied in \cite{cva1} in the case of credit and debit valuation adjustments. \cite{madanCds} proposes an approach to estimate the parameters of risk acceptability of CDSs and their time dependence, and applies the methodology to a period including, but not limited to, the 2008 financial crisis. Therein, the industry practice of taking mid CDS quotes to proxy their risk-neutral counterparts is adopted and, for each CDS, a flat hazard rate term structure is considered in the calibration. Further, within the dynamic conic finance framework, bid and ask price processes for CDSs are constructed in \cite{dynamicConicFinance} and in \cite{dynamicConicFinance2}. 

A methodology that allows to jointly calibrate a CDS model to bid and ask market quotes and to imply risk-neutral default probabilities without computing them from mid quotes is not yet available. In the present article we provide an approach to tackle this problem within the conic finance paradigm. The economic rationale behind our research question is given by the fact that, while in practice model parameters are usually calibrated starting from mid quotes as proxies for their risk-neutral counterparts, in reality a security trades neither at the risk-neutral nor at the mid price, but instead either at the bid or at the ask, depending on the direction of the trade. Thus, one might then want to be able to include in a simple manner the liquidity effect within their CDS pricing equations, for instance as CDS markets for single name CDSs not being amongst the most liquid; see \cite{cdsLiquidity}. Given our aim of extracting risk-neutral probabilities from the currently-observed bid and ask CDS quotes, a static approach to conic finance suffices. Moreover, this allows to easily define a term structure for the liquidity level of the CDS market, and also to restate the bid-ask calibration problem in terms of recursively solving a non-linear constrained system. In the case of CDSs, modeling the default time via a reduced-form model by explicitly specifying the functional form of its distribution is a popular choice. In particular, we consider the case of a Poisson process driving the dynamics of the credit event, of which the standard International Swaps and Derivatives Association (ISDA) model \cite{openGamma}, which is a common choice  amongst financial practitioners, is a possible specification. In these settings, we show that the bid-ask CDS calibration process has, under some mild assumptions, a unique solution. Further, the methodology proposed here allows to jointly strip \emph{implied liquidity} parameters for CDS markets in the spirit of \cite{impliedLiquidity} with a term structure. To the best of our knowledge, this is the first attempt to calibrate a credit model using Choquet expectations to bid and ask CDSs quotes directly without relying on approximating their risk-neutral counterparts with the respective mid quotes and, thus, our contribution is novel.

\medskip

This paper is organized as follows. In Section \ref{sec:valuationOfCdss} we recall the basics of CDS valuation in the risk-neutral framework, how Poisson processes can be used to model the default time for CDS valuation purposes, as well as their calibration to market data. In Section \ref{sec:bidAsk}  we briefly recall how pricing via distorted expectations works, while in Section \ref{sec:twoPriceEconomy}, we introduce the CDS bid-ask calibration problem in the settings of \cite{cherny}. We show that the problem admits, under simple assumptions, a unique solution. We also provide a calibration example, based on the standard ISDA model, which is a special case of Poisson-based CDS model. Section \ref{sec:conclusion} concludes.

\section{Basic notions and valuation of CDSs}\label{sec:valuationOfCdss}
A CDS is a bilateral derivative contract which involves the transfer of the credit risk arising from bonds or other forms of debt issued, amongst others, by corporates, municipalities, or sovereign states. Thus, a CDS is a sort of insurance policy, as it provides the protection buyer, who might or might not own the underlying credit\footnote{As CDSs do not require the buyer of the contract to hold the insured asset.}, with protection against a \emph{credit event}. The formal definition of a credit event is contract-specific and complex from the legal angle. Therefore, from here onward, the expressions \emph{credit event} and \emph{default} will be used to refer to a set of circumstances that trigger the protection payment. 

A CDS contract involves two parties, i.e., a \emph{protection buyer} and a \emph{protection seller}. The protection seller commits to compensate for the (potential) loss of the counterparty if a default event for the reference entity occurs within a predetermined time frame. A CDS can be therefore seen as a derivative contract where the underlying is the default time of the issuing entity. CDSs were initially mainly physically settled: if default event occurs, then the protection buyer delivers one of the defaulted bonds of the reference entity to the protection seller, in return for its par value. However, due to the size of the CDS market it might happen that, should there not be enough supply of defaulted bonds in the market, an auction is conducted to determine what the recovery value of the defaulted bond is. In this case the CDS contract is, thus, cash settled, and this is nowadays the most common settlement practice (for further details refer to \cite{CDSISDADef2003}).  The standardization process of credit derivatives led by the ISDA,  see \cite{ISDAOTC, CDSISDADef2003,CDSISDADef2014}, has introduced conventions on the way these contracts are traded. These conventions can be region specific: for example, some conventions for North-American CDSs (\emph{CDS big bang}; see \cite{CDSBigBang}) might differ from those of European CDSs (\emph{CDS small bang}; see \cite{MarkitSmallBang}). Before the CDS standardization process,  in a similar fashion to interest rate swaps, CDSs used to be quoted at par, i.e., the coupon rate was defined such that the contract had zero value at inception, for both parties. However, CDSs have now standard coupons and, as a consequence, a non-zero entry cost called \emph{upfront payment}, which is payed on the cash settlement date and that is usually quoted as a percentage (i.e., as \emph{points upfront}) of the notional amount. The upfront payment can be interpreted as an amount reflecting the difference in value between a par CDS and one with a given standard coupon.

CDSs can be used to estimate the default probabilities of a wide range of issuing entities by means of appropriate pricing models calibrated using the available CDS market data. These implied probabilities can be used, for instance, as inputs in various valuation adjustment calculations,  which makes CDS useful for hedging purposes; see \cite[Ch. 4 and Ch. 12]{green}.

We introduce now the essential notations and conventions that define CDSs; refer, for instance, to \cite{openGamma} for a detailed overview. In a standard CDS contract the \emph{CDS dates} are the semi-annual termination dates of the CDS, and fall on 20 March and 20 December of each year.\footnote{Before 20 December 2015 the frequency of the CDS roll dates was quarterly instead of semi-annual, with resulting termination dates falling on 20 March, 20 June, 20 September and 20 December of each year; see \cite{ISDAchange}.} From the \emph{protection effective date} (i.e., $t_p$) the protection starts; this date is generally defined as the valuation date plus one day. The \emph{cash settlement date} (i.e., $t_s$) is when any upfront payments are made, and can be lagged by a few business days compared to the valuation date (the standard ISDA model defines it as the valuation date plus three business days; see \cite{openGamma}).  The \emph{accrual start dates} (i.e., $s_1, \dots,s_N$) are used as starting points for calculating the coupon payments. This increasing sequence contains all the CDS dates before the maturity date, with $s_1$ set as the previous CDS date before the protection effective date. This is because holding a CDS over a coupon payment entails paying or receiving the full coupon payment amount. The \emph{accrual end dates} (i.e., $e_1,\dots,e_{N}$) are the dates used as end points for calculating the premium payments, with $e_N$ the maturity of the contract. Premium payments are made by the protection seller to the protection buyer at the \emph{payment dates} (i.e., $t_1,\dots,t_N$). We denote with $\lgd$ the loss-given-default expressed per unit of notional (i.e., one minus the recovery rate), that we assume to be constant, and with $\Delta(t,s)$ the year fraction between $t$ and $s$ ($t<s$); see \cite{openGamma} for further details concerning day-count conventions. In particular, we use the shorthand notation $\Delta_i$ instead of $\Delta(s_i,e_i)$. Further, $N$ denotes the notional amount, $\tau$ the default time, and $\mathds{1}_{\left\{\,\cdot\,\right\}}$ the indicator function.

The protection leg is the contingent payment the protection seller makes to the protection buyer. Despite in practice there is usually a lag between the default time and the protection payment, modeling-wise at $\tau$ the protection seller is assumed to pay to the counterparty the amount
\begin{equation}
\lgd\cdot N\cdot\mathds{1}_{\left\{t_p\leq\tau\leq e_N\right\}}.
 \end{equation}

The premium leg is defined as the series of payments the protection buyer makes to the counterparty until either a credit event occurs or the contract expires. We denote its fixed coupon, per unit of notional, with $C$. The amount paid by the protection buyer to the protection seller on each payment date $t_i$ is given by
\begin{equation}\label{protectionCoupon}
C\cdot N \cdot\Delta_i \cdot\mathds{1}_{\left\{\tau> e_i\right\}}.
\end{equation}

In the case of a credit event, the protection buyer pays to the counterparty the \emph{accrued coupon}, i.e., if $\tau\in[s_i,e_i]$, the accrued coupon payed upon default equals
\begin{equation}\label{accruedCoupon}
C\cdot N \cdot\Delta(s_i,\tau)\cdot \mathds{1}_{\left\{s_i\leq\tau\leq e_i\right\}}.
\end{equation}

On a filtered probability space $(\Omega, \mathcal{F}, (\mathcal{F}_{t})_{t\in[0,T]}, \mathbb{P})$, with $\mathbb{P}$ the real-word probability measure, we denote with $\mathbb{Q}$ a risk-neutral measure, with $\mathbb{E}^\mathbb{Q}(\,\cdot\,)$ the expectation, at valuation date, with respect to $\mathbb{Q}$, with $\DF(t)$ the discount factor from $t$ to valuation date, while with $\PS(t)$ the survival probability of the reference entity until time $t$, i.e., $\PS(t)\coloneqq\mathbb{Q}(\tau>t)$. From here onward we assume, without loss of generality, unit notionals.

For the protection buyer the value of a CDS equals the value of its protection leg minus the one of its premium leg. In symbols
\begin{equation}\label{eq:pvProt}
	\mathrm{PV}^{\text{prot}} =  \lgd\cdot\mathbb{E}^\mathbb{Q}\left(\DF(\tau)\cdot\mathds{1}_{\left\{t_p\leq \tau\leq e_N\right\}}\right),
\end{equation}
while the value of the premium leg is given by
\begin{equation}\label{eq:pvPrem}
	\mathrm{PV}^{\text{prem}} = C\sum_{i=1}^N \mathbb{E}^\mathbb{Q}\left(\DF(t_i)\cdot \Delta_i \cdot  \mathds{1}_{\left\{\tau>e_i\right\}}+\DF(\tau) \cdot\Delta(s_i, \tau) \cdot  \mathds{1}_{\left\{s_i\leq \tau\leq e_i\right\}}\right).
\end{equation}
The present value of the CDS, from the perspective of the protection buyer, is defined as
\begin{equation}\label{eq:pv}
	\mathrm{PV}^{\text{CDS}} \coloneqq \mathrm{PV}^{\text{prot}} - \mathrm{PV}^{\text{prem}}.
\end{equation}

\subsection{Dynamics of the survival probabilities}\label{dynamicsOfTheSurvivalProbabilities}

There are different approaches to model the dynamics of the default time of an issuing entity, and the category of \emph{reduced-form} (or \emph{intensity}) models is one of these. In reduced-form models the probability distribution of the credit event is modeled directly; two well-known illustrations of models belonging to this class are, amongst others, \cite{JarrowAndTurnbull}, where a discrete Poisson bankruptcy process is presented, and \cite{DuffieAndSingleton}, where the risk-free discounting short-rate process is augmented with an instantaneous intensity process to account for credit risk. Reduced-form models are fundamentally different, for instance, from \emph{structural} (or \emph{firm-value}) models, which characterize defaults as consequences of events such as the value of a firm being too low for covering its liabilities, of which the so called Merton's 1974 firm-value model \cite{merton74} is an illustration. This idea has been later extended in \cite{black76}, where default occurs when the value of the firm's asset falls below a given threshold level, and that is considered the first prototype of the so called \emph{first passage time} models. While the main advantage of structural models is that of their consistency with the capital structure of the firm, they require firm-specific information that is not necessarily easily available. Thus, the main difference between the reduced-form and the structural approaches is given by the fact that default is something exogenous in the former, while endogenous in the latter. The idea of modeling the default probability distribution directly as done in reduced-form models allows, at least in theory, to simplify the problem tractability, as modeling the default event per se is easier than modeling the economic situations that might cause it. This often makes reduced-form models preferable to structural ones for practical applications, as done in this article, given that we explicitly consider the CDS market as source of information. For more details concerning different approaches to credit risk modeling the reader can refer, amongst others, to \cite{rutkowski}.

In the context of reduced-form models a possible approach to introduce a term structure for the distribution of the default time is that of defining the survival probability $\PS$ via
\begin{equation}\label{PSlambda}
	\PS(t)\coloneqq e^{-\int_0^t \lambda(s)\,ds}, 
\end{equation}    
\noindent where the deterministic function $\lambda:[0,+\infty)\rightarrow(0,+\infty)$ is called \emph{hazard rate} (or \emph{default intensity}) function.

Assuming that $K$ CDS quotes for a given reference entity with the same fixed coupon are available in the market, maturing respectively at $e_{N_1},\dots,e_{N_K}$ where $e_{N_1}<\ldots<e_{N_K}$, a possible way to define the hazard rate function with a term structure is given by setting 
\begin{equation}\label{lambda}
\lambda(t)\coloneqq
\begin{cases}
l_1(\lambda_1;t) &\mbox{if } 0\leq t\leq e_{N_1}\\
l_2(\lambda_1,\lambda_2;t) &\mbox{if } e_{N_1}< t\leq e_{N_2}\\
\ \vdots&\\
l_K(\lambda_1,\ldots,\lambda_K;t) &\mbox{if } e_{N_{K-1}}< t
\end{cases}\!\!,
\end{equation}      
\noindent where $\lambda_i>0$, $l_i(\lambda_1,\ldots,\lambda_i;\,\cdot\,)$ is deterministic and continuous, and with the function $l_i(\lambda_1,\ldots,\lambda_{j-1},\,\cdot\,,\lambda_{j+1},\ldots,\lambda_i;t)$ increasing, for $1\leq j\leq i$. The parameters $\lambda_1,\ldots,\lambda_K$ are those that, once set, specify the distribution of the default time. Common specifications for (\ref{lambda}) are, among others, piecewise-constant and piecewise-linear.\footnote{That is, given $K$ positive values $\bar{\lambda}_1,\ldots,\bar{\lambda}_K$, in the piecewise-constant case, for every $i$, $l_i(\lambda_1,\ldots,\lambda_i;t)\equiv \bar{\lambda}_i$. For the piecewise-linear case, on the other hand, we have $l_1(\lambda_1;t)\equiv \bar{\lambda}_1$, $l_i(\lambda_1,\ldots,\lambda_i;t)=\bar{\lambda}_{i-1}+\frac{\bar{\lambda}_{i}-\bar{\lambda}_{i-1}}{e_{N_i}-e_{N_{i-1}}}\cdot(t-e_{N_{i-1}})$ for $2\leq i\leq K-1$, while $l_K(\lambda_1,\ldots,\lambda_K;t)\equiv \bar{\lambda}_K$.} The first option provides the simplest assumption possible concerning the behavior of the hazard rate function across CDS maturities, and it as well results in better numerical stability compared to its piecewise-linear counterpart; further, modeling the default time via this simple approach is often enough for practical applications such as for its usage in several credit valuation adjustment calculations; see \cite[Sec. 4.4]{green}. Moreover, note that \eqref{eq:pv} is a model-independent relationship which assumes interest rates being independent from the default time. In this case, see \cite{Brigo}, CDS models can be calibrated to match CDS quotes exactly, and the resulting implied default probabilities calculated using different model specifications are expected to be in line with each other. In \cite{Brigo} this fact is illustrated by taking into account Lehman Brothers CDSs during different periods between August 2007 until the bank files for bankruptcy in September 2008. In particular, therein a comparison between the default probabilities implied using the Analytically-Tractable First Passage (AT1P) model and the intensity model with hazard rate function defined as per \eqref{lambda} in a piecewise-constant manner is provided (the AT1P model is a first-passage time structural model where default events are triggered by a firms' assets value hitting a deterministic threshold). The results show that the two models, despite their differences in terms of specifications, produce extremely close default probabilities (i.e., the largest difference observed at the calibration maturities is of the order of 0.8\%). Hence, as we are interested in implying default probabilities at the valuation time, this further justifies the choice of the modeling approach we have followed: if little model risk is linked to the model specifications used to extract the default probabilities, then model simplicity and tractability should be encouraged.

To calibrate the model parameters, we denote with $\UF^{\text{bid}}_i$ ($\UF^{\text{ask}}_i$) the bid (ask) upfront premium of the $i$\textsuperscript{th} quoted CDS contract. Their mid counterparts are denoted as $\UF^{\text{mid}}_i$. The values $\lambda_1,\ldots,\lambda_K$ are computed to match the quoted CDS market values. Risk-neutral premia are not observable, and they are usually proxied with their mid counterparts.

Due to quoting convention, the first upfront premium is defined such that the equality
\begin{equation}\label{eq:lambda1}
	\PV^{\text{CDS}}_1(\lambda_1) + \mathrm{Acc} =\DF(t_s)\cdot \UF_1^{\text{mid}}
\end{equation}
holds, where $\PV^{\text{CDS}}_1(\lambda_1)$ denotes the present value of the first CDS, as a function of $\lambda_1$, and where $\mathrm{Acc}$ equals $\DF(t_s)\cdot C\cdot\Delta(s_1, t_p)$. We can solve for $\lambda_1>0$ such that \eqref{eq:lambda1} is satisfied.\footnote{The higher the values reached by the hazard rate function, the higher the chances are that there will be a default. Thus, the more the protection seller wants to be paid to sell insurance. One would then intuitively expect $\PV^{\text{CDS}}_1(\lambda_1),\ldots,\PV^{\text{CDS}}_K(\lambda_K)$ to be strictly increasing in $\lambda_1,\ldots,\lambda_K$, respectively.  In \ref{sec:remark} we show that, for common coupon and $\lgd$ values, the value of the $i$\textsuperscript{th} CDS calculated using the setup outlined in the current section is strictly increasing in $\lambda_i$ when $i>1$, and that the same holds when $i=1$, at least when $\lambda_1$ belongs to an interval wide enough for practical purposes. The strict monotonicity of $\PV^{\text{CDS}}_i(\lambda_i)$  guarantees that, if $\mathrm{Acc} -\DF(t_s)\cdot \UF_i^{\text{mid}}\in\PV^{\text{CDS}}_i([0,+\infty))$, the equation $\PV^{\text{CDS}}_i(\lambda_i) + \mathrm{Acc} =\DF(t_s)\cdot \UF_i^{\text{mid}}$ admits a unique solution and, as a consequence, that the CDS calibration problem is well-defined. From here onwards we will always assume this to be the case.}

Then, we can consider the second upfront premium. By using the value of $\lambda_1$ computed above, one can imply $\lambda_2>0$ such that
\begin{equation}
	\PV^{\text{CDS}}_2(\lambda_2) +\mathrm{Acc} = \DF(t_s)\cdot\UF_2^{\text{mid}},
\end{equation}
with $\PV^{\text{CDS}}_2(\lambda_2)$ the present value of the second CDS as a function of $\lambda_2$.

By proceeding inductively for the remaining indices this procedure allows to define a term structure for the default probabilities that is in line, via (\ref{lambda}), with the mid quotes ``observed'' in the market.

Note that no choice of the hazard rate function in (\ref{PSlambda}) comes without problems. For instance, the simple possible choice of assuming piecewise-constant hazard rates, which is at the base of the so called standard ISDA model, can produce negative hazard rates under specific market circumstances; see \cite[Ch. 4.4.3]{green}. Therefore, depending on the market conditions, some functional forms for the hazard rates can be more suitable than others. It is thus necessary to assume, for the chosen functional form of the hazard rate function, that risk-neutral quoted values allow the model to be properly specified, as well as the calibration problem under one-price settings to be successful.

\section{Bid-ask pricing via distorted expectations}\label{sec:bidAsk}

An index of acceptability is a map $\alpha: L^\infty(\Omega, \mathcal{F}, \mathbb{P}) \to [0,+\infty]$ aiming to measure the quality of random cashflows, i.e., for a given contingent claim $X$ the higher the value of $\alpha(X)$, the higher $X$ is ranked. We say that $X$ is acceptable by the market at level $\gamma$ whenever $\alpha(X)\geq\gamma$. An index of acceptability $\alpha$ is expected to satisfy some basic properties. Namely, if both $X$ and $X'$ are acceptable at level $\gamma$, then also $\lambda\cdot X+(1-\lambda)\cdot X'$ for $\lambda\in[0,1]$ is (quasi-concavity property). $\alpha$ is assumed to be monotonic, i.e., if $X\geq X'$ then $\alpha(X)\geq\alpha(X')$, as well as scale-invariant, i.e., $\alpha(\lambda\cdot X)=\alpha(X)$ for every $\lambda>0$. Lastly, $\alpha$ is assumed to satisfy the Fatou property, which means that, if $(X_n)_n$ is a sequence of random variables such that, for every $n$, $|X_n|\leq 1$ and $\alpha(X_n)\geq \gamma$, then if $(X_n)_n$ converges in probability to a random variable $X$, also $\alpha(X)\geq \gamma$. It can be proven, see \cite{cherny2009}, that given an index of acceptability $\alpha$, for every $x\geq0$ there exists a set $\mathfrak{Q}_x$ of probability measures absolutely continuous with respect to $\mathbb{P}$ such that 
\begin{equation}
\alpha(X)=\sup\left\{x\geq0:\inf_{\mathbb{Q}\in\mathfrak{Q}_x}\mathbb{E}^{\mathbb{Q}}(X)\geq0\right\}
\end{equation}
and, further, if $x\leq x'$ then $\mathfrak{Q}_x\subseteq\mathfrak{Q}_{x'}$.

A \emph{coherent risk measure} is a functional $\rho:L^\infty(\Omega, \mathcal{F}, \mathbb{P}) \to [0,+\infty]$ that satisfies the transitivity, sub-additivity, positively homogeneity and monotonicity properties; see \cite[Ch. 4.1]{bookConicFinance}.\footnote{$\rho$ is said to be transitive (or translation-invariant) when $\rho(X+\lambda)=\rho(X)+\lambda$ for every $\lambda\in\mathbb{R}$, sub-additive when $\rho(X+X')\leq\rho(X)+\rho(X')$, positively homogeneus when $\rho(\lambda\cdot X)=\lambda\cdot\rho(X)$ for every $\lambda>0$, and monotonic when $\rho(X)\leq\rho(X')$ if $X\leq X'$. Note that the definition of coherent risk measure introduced in \cite{artzner} differs from the one provided here in the sense that, in \cite{artzner}, cash-invariance reads $\rho(X+\lambda)=\rho(X)-\lambda$, where $\lambda\in\mathbb{R}$, while monotonicity as $\rho(X)\geq\rho(X')$ when $X\leq X'$ (refer to \cite[Sec. 4.2.1]{grabisch} for some remarks concerning these differences). Given that we consider here coherent risk measures within the conic finance paradigm, we adopt therefore the definition outlined in \cite{bookConicFinance}.} It can be shown, see \cite{delbaen2009}, that a coherent risk measure can be identified with a functional of the form $\sup_{\mathbb{Q}\in\mathfrak{Q}}\mathbb{E}^{\mathbb{Q}}(X)$, where $\mathfrak{Q}$ is a set of probability measures absolutely continuous with respect to $\mathbb{P}$. Therefore, the level of acceptability of a cashflow $X$ can be rewritten in terms of coherent risk measures, i.e., as
\begin{equation}\label{eq:ia}
\alpha(X)=\sup\left\{ x\geq0:\rho_x(-X)\leq0 \right\},
\end{equation} 
where $(\rho_x)_{x\geq0}$ is a family of coherent risk measures such that $\rho_x(-X)\leq\rho_{x'}(-X)$ whenever $x\leq x'$. From this, it then follows that $\alpha(X)\geq\gamma$ if and only if  $\rho_\gamma(-X)\leq0$.\footnote{If $\rho_\gamma(-X)\leq0$, by \eqref{eq:ia} it follows that $\alpha(X)\geq\gamma$. On the other hand, assume that $\rho_{\gamma}(-X)>0$. Then, $\rho_x(-X)\geq\rho_{\gamma}(-X)>0$ when $x\geq\gamma$, from which $\alpha(X)<\gamma$, contradiction.\label{fn:condition}} Note that, for every $x\geq0$, one can define the \emph{acceptability set} associated with $\alpha$ as $\mathcal{A}_x\coloneqq\left\{ X\in L^\infty(\Omega, \mathcal{F}, \mathbb{P}):\rho_x(-X)\leq0 \right\}$. It then follows that $\left( \mathcal{A}_x \right)_{x\geq0}$ is a family of convex cones, each containing the non-negative random variables, with size decreasing in $x$. Thus, given an index of acceptability and a family of coherent risk measures $\left( \rho_x\right)_{x\geq0}$, for every acceptability level $x$ we obtain a convex cone $\mathcal{A}_x$ of contingent claims that are acceptable for the market, from which the term \emph{conic finance} originates.

The (asymmetric) Choquet integral of $X$ with respect to a \emph{non-additive probability} $\mu$ is defined as
\begin{equation}\label{eq:choquetGeneral}
	(\text{C}) \int_\Omega X \, d\mu \coloneqq \int_{-\infty}^0 \mu(X\geq t)-1\,dt + \int_0^{+\infty} \mu(X\geq t)\,dt,
\end{equation}
whenever it exists; see \cite[Ch. 5]{denneberg}. Choquet integration provides a natural extension to the Lebesgue integral able to deal with non-additive probabilities, as if $\mu$ in \eqref{eq:choquetGeneral} is $\sigma$-additive, then \eqref{eq:choquetGeneral} would reduce to a Lebesgue integral; see \cite{mesiar}.

We denote with $\psi(\,\cdot\,)$ a concave distortion from $[0,1]$ to $[0,1]$, i.e., a concave function such that $\psi(0)=0$ and $\psi(1)=1$, where $\psi(\mathbb{Q})(A)\coloneqq\psi(\mathbb{Q}(A))$, for every $\mathbb{Q}$-measurable set $A$; note that the distorted probability measure just defined is not, in general, additive. Let $\left(\psi_x\right)_{x\geq 0}$ be an increasing family of concave distortion functions, and assume a risk-neutral measure $\mathbb{Q}\in\bigcap_{x\geq0}\mathfrak{Q}_x$.\footnote{Given that $\mathfrak{Q}_x\subseteq\mathfrak{Q}_{x'}$ when $x'\geq x$, it is sufficient to assume that a risk-neutral measure $\mathbb{Q}$ belongs to $\mathfrak{Q}_0$.} We recall, see \cite{delbaen2009,grabisch}, that the functional $\rho_x$ such that $X\mapsto (\text{C}) \int_\Omega X \, d\psi_x({\mathbb{Q}})$ defines a coherent risk measure. This is because the (asymmetric) Choquet integral with respect to any non-additive measure guarantees the transitivity, positive homogeneity and monotonicity properties to be satisfied; see \cite[Prop. 5.1]{denneberg}. Further, the distorted probability measure $\psi_x(\mathbb{Q})$ is a submodular\footnote{A non-additive probability $\mu$ is said to be submodular (or concave) if, for every $\mu$-measurable sets $A$ and $A'$, it results that $\mu(A\cup A')+\mu(A\cap A')\leq \mu(A) + \mu(A')$.} set function, see \cite[Ex. 2.1]{denneberg}, which guarantees subadditivity; see \cite[Th. 6.3]{denneberg}. Thus, as suggested in \cite{cherny2009}, we can employ functionals of this form as tools for modeling indices of acceptability via the relationship
\begin{equation}\label{eq:oia}
\alpha(X)=\sup\left\{x\geq0:	(\text{C})\int_\Omega -X \, d\psi_x({\mathbb{Q}})\leq0\right\}.
\end{equation}
Indices of acceptability defined as in \eqref{eq:oia} are named \emph{operational indices of acceptability}; see \cite{cherny2009}.

We now assume that the market considers acceptable only the cashflows with an acceptability level of, at least, $\gamma$. The market is willing to buy $X$, which we assume to pay off at $T$, at a price $b$ if and only if $\alpha(X-\DF(T)^{-1}\cdot b)\geq\gamma$  (recall footnote \ref{fn:condition}), i.e., if and only if $
b\leq-\DF(T)\cdot(\mathrm{C})\int_\Omega -X\,d\psi_\gamma(\mathbb{Q})$. It follows that, if the market considers acceptable all the cashflows with a level of acceptability of at least $\gamma$, then the ($\gamma$-dependent) bid price of $X$, denoted as $\mathrm{bid}_\gamma(X)$, would equal
\begin{equation}\label{eq:bid}
\mathrm{bid}_\gamma(X)=-\DF(T)\cdot(\mathrm{C})\int_\Omega -X\,d\psi_\gamma(\mathbb{Q}).
\end{equation}
Denoting the ask price of $X$ given a level of acceptability $\gamma$ as $\mathrm{ask}_\gamma(X)$, by observing that $\mathrm{ask}_\gamma(X)=-\mathrm{bid}_\gamma(-X)$, from \eqref{eq:bid} it follows that
\begin{equation}\label{eq:ask}
\ask_\gamma(X)=\DF(T)\cdot(\mathrm{C})\int_\Omega X\,d\psi_\gamma(\mathbb{Q}).
\end{equation}
Thus, if the distribution function of $X$, as well as its bid or ask prices, are available, one can compute the level of $\gamma$ needed to obtain the quoted price.

\section{CDSs in a two-price economy}\label{sec:twoPriceEconomy}

Given a parametric family of distortion functions $(\psi_\gamma)_{\gamma\geq 0}$, one can set a term structure for the liquidity parameter $\gamma$ by assigning a value $\gamma_i$ to each maturity $e_{N_i}$. These values can be then interpolated, once the model has been calibrated, if one wants to calculate bid and ask prices for non-quoted maturities. We still assume a Poisson process as in Section \ref{dynamicsOfTheSurvivalProbabilities} driving the risk-neutral dynamics of $\tau$. We denote with $\tilde{X}_i^{\text{CDS}}$ the sum of the cashflows of the $i$\textsuperscript{th} CDS where all the cashflows are deferred to maturity, i.e., $\text{cashflow}(t)\mapsto \text{cashflow}(t)\cdot\frac{\DF(t)}{\DF(e_{N_i})}$. The bid and ask prices of $\tilde{X}_i^{\text{CDS}}$ are denoted as $\text{bid}^{\text{CDS}}_i$ and $\text{ask}^{\text{CDS}}_i$, respectively. 

We start with the first CDS. We need then to solve for $\lambda_1>0$ and $\gamma_1>0$ such that
\begin{equation}
\begin{cases}
\text{bid}^{\text{CDS}}_1(\lambda_1, \gamma_1)+\mathrm{Acc} = \DF(t_s)\cdot\UF_1^\text{bid}\\
\text{ask}^{\text{CDS}}_1(\lambda_1, \gamma_1) +\mathrm{Acc} = \DF(t_s)\cdot \UF_1^\text{ask}
\end{cases}\!\!\!\!\!\!,
\end{equation}
with
\begin{equation}\label{eq:constraint}
 \DF(t_s)\cdot\UF_1^{\text{bid}} < \PV^{\text{CDS}}_1(\lambda_1)+\mathrm{Acc}<  \DF(t_s)\cdot\UF_1^{\text{ask}},
\end{equation}
where the constraint (\ref{eq:constraint}) guarantees that the risk-neutral price of the CDS lies between its corresponding bid and ask prices.

By proceeding in a similar way as done in Section \ref{dynamicsOfTheSurvivalProbabilities}, at every step we need to solve a system of the form
\begin{equation}
	\begin{cases}
	\text{bid}_{i}(\lambda_i,\gamma_i) + \mathrm{Acc} = \DF(t_s)\cdot\UF_i^\text{bid}\\
	\text{ask}_{i}(\lambda_i,\gamma_i) + \mathrm{Acc} = \DF(t_s)\cdot\UF_i^\text{ask}
	\end{cases}\!\!\!\!\!\!,
\end{equation}
	with
	\begin{equation}
		{\DF(t_s)\cdot\UF_i^{\text{bid}} < \PV^{\text{CDS}}_{i}(\lambda_i) +\mathrm{Acc} < \DF(t_s)\cdot\UF_i^{\text{ask}}}.
	\end{equation}
Above, $\lambda_i$ represents the implied hazard rate for the $i$\textsuperscript{th} maturity, while $\gamma_i$ the corresponding \emph{implied liquidity} in the sense of \cite{impliedLiquidity}.	
	
The problem of determining whether a (potentially unique) solution for this constrained non-linear system will be addressed in this section. We start by simplifying the notation in the constrained system above by rewriting it as  
\begin{equation}
	\begin{cases}\label{eq:system}
	\bidarg = b\\
	\askarg = a
	\end{cases}\!\!\!\!\!\!,
	\end{equation}
	with
	\begin{equation}\label{eq:constraint2}
		b < \PV^{\text{CDS}}(\lambda) < a.
	\end{equation}
In both (\ref{eq:system}) and (\ref{eq:constraint2}) we have set $b\coloneqq \DF(t_s)\cdot\UF_i^{\text{bid}}-\mathrm{Acc}$ and $a\coloneqq \DF(t_s)\cdot\UF_i^{\text{ask}} -\mathrm{Acc}$.

\medskip

We provide now three lemmas that, under some mild assumptions related to the liquidity of the market, will be used in Theorem \ref{prop:existence} to prove the existence and the uniqueness of a solution for the constrained non-linear system (\ref{eq:system}). We start by assuming that the quoted bid and ask prices of the chosen CDS are within the interval of possible risk-neutral prices that can be obtained by changing the parameter $\lambda$. In practice, this technical condition translates into the possibility of being able to calibrate the risk-neutral parameter $\lambda$ to match bid and ask market quotes, respectively, from which Lemma \ref{lemma:lambda interval} follows.

\begin{Assumption} The inequalities $\inf_{\lambda> 0} \PV^{\mathrm{CDS}}(\lambda) < b$ and $\sup_{\lambda> 0} \PV^{\mathrm{CDS}}(\lambda) > a$ hold.
\end{Assumption}

\begin{lemma}\label{lemma:lambda interval}
Under Assumption 1, there exists an interval $[\lambda_b, \lambda_a]$ such that there is equivalence between $b\leq \PV^{\mathrm{CDS}}(\lambda) \leq a$ and $\lambda \in [\lambda_b, \lambda_a]$.
\end{lemma}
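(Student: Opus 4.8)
The plan is to exploit that, at the $i$\textsuperscript{th} calibration step with $\lambda_1,\dots,\lambda_{i-1}$ already fixed, the map $\lambda\mapsto\PV^{\mathrm{CDS}}(\lambda)$ is a function of the single real variable $\lambda\in(0,+\infty)$ that is both continuous and strictly increasing, and then to read off $[\lambda_b,\lambda_a]$ as a preimage under this map.

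First I would record the two structural properties of $\PV^{\mathrm{CDS}}(\cdot)$ that make the argument work. Continuity on $(0,+\infty)$ follows from the closed-form expressions \eqref{eq:pvProt}–\eqref{eq:pv}, since each term there is a finite sum of integrals of bounded quantities against the survival law determined by \eqref{PSlambda}–\eqref{lambda}, which depends continuously on $\lambda$; strict monotonicity of $\PV^{\mathrm{CDS}}(\cdot)$ in the active parameter $\lambda$ is the standing assumption recalled in the footnote following \eqref{eq:lambda1} (and discussed in \ref{sec:remark}), which we are entitled to use.

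Next, a continuous strictly increasing map carries $(0,+\infty)$ homeomorphically onto the open interval $I:=\bigl(\inf_{\lambda>0}\PV^{\mathrm{CDS}}(\lambda),\ \sup_{\lambda>0}\PV^{\mathrm{CDS}}(\lambda)\bigr)$. By Assumption~1 the endpoints of $I$ satisfy $\inf_{\lambda>0}\PV^{\mathrm{CDS}}(\lambda)<b$ and $a<\sup_{\lambda>0}\PV^{\mathrm{CDS}}(\lambda)$, and since a bid quote never exceeds the corresponding ask quote we have $b\le a$; hence $[b,a]\subseteq I$. Consequently the equations $\PV^{\mathrm{CDS}}(\lambda)=b$ and $\PV^{\mathrm{CDS}}(\lambda)=a$ each admit a solution, unique by strict monotonicity; call them $\lambda_b$ and $\lambda_a$, and note $\lambda_b\le\lambda_a$ again by monotonicity.

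Finally, for the claimed equivalence I would invoke strict monotonicity once more: for any $\lambda>0$, the condition $b\le\PV^{\mathrm{CDS}}(\lambda)\le a$ is the same as $\PV^{\mathrm{CDS}}(\lambda_b)\le\PV^{\mathrm{CDS}}(\lambda)\le\PV^{\mathrm{CDS}}(\lambda_a)$, which holds if and only if $\lambda\in[\lambda_b,\lambda_a]$. The only point requiring care is checking that the two hypotheses — continuity and, above all, strict monotonicity in the single active variable $\lambda$ — are genuinely in force at each recursion step; once those are granted, the statement reduces to the elementary intermediate-value and monotone-inverse facts above, so I do not expect a substantive obstacle here.
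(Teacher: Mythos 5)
Your proposal is correct and follows the same route as the paper's own (much terser) proof: continuity and the standing strict-monotonicity assumption on $\lambda\mapsto\PV^{\mathrm{CDS}}(\lambda)$, combined with Assumption~1, yield $\lambda_b$ and $\lambda_a$ via the intermediate value theorem and the equivalence via monotone inversion. You have simply made explicit the steps the paper leaves implicit.
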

\begin{proof}
$\PV^{\mathrm{CDS}}(\lambda)$ is an increasing and continuous function of $\lambda$. From Assumption 1 the result follows.
\end{proof}

We now introduce a second assumption that guarantees that, for $\lambda$ in a given range, theoretical bid-ask spreads can exceed the observed one.\footnote{Observe that, when $\gamma=0$, then bid and ask prices reduce to the ones calculated with respect to $\mathbb{Q}$ and that, for a given $\lambda$, the function $\askarg-\bidarg$ is strictly increasing in $\gamma$. When $\gamma\rightarrow +\infty$, then $\psi_\gamma(\mathbb{Q})$ approximate the distribution that assigns zero to the null sets and one to any other set. $\askarg-\bidarg$ can be rewritten, see Section \ref{sec:example}, as $\askargprot - \bidargprem - \bidargprot +\askargprem$, where the superscripts identify the two legs of the contract. Ignoring discount factors for simplicity, $\askargprot$ has magnitude of the order of $\lgd\cdot\psi_\gamma(\mathbb{Q})(t_p\leq \tau\leq e_N)=\lgd$; see (\ref{eq:pvProt}). Further, as $(\text{C})\int_\Omega- X \, d\mu=(\text{C})\int_\Omega X \, d\overbar{\mu}$ with $\overbar{\mu}$ denoting the dual measure of $\mu$, see \cite[Prop. 5.1]{denneberg}, from $\bid(X)=-\ask(-X)$ it  follows that $\bidargprot=-\lgd\cdot\overline{\psi_\gamma(\mathbb{Q})}(t_p\leq \tau\leq e_N)=0$. Therefore, for extreme values of $\gamma$ the theoretical bid-ask spread $\askarg-\bidarg$ reaches high values due to its positive components $\askargprot$ and $\askargprem$, and to $\bidargprem$ being below its counterpart calculated when $\gamma=0$. Thus, for practical purposes, this assumption is in general satisfied.} Intuitively, Assumption 2 is a technical condition stating that, for every fixed $\lambda$ in $[\lambda_a,\lambda_b]$, it is always possible to construct a bid and an ask price that reflect the bid-ask spread observed in the market. Lemmas \ref{lemma:uniqueness gamma} and \ref{lemma:continuity gamma} follow.
\begin{Assumption} For every $\lambda \in [\lambda_b, \lambda_a]$ there exists $\gamma> 0$ such that $\askarg - \bidarg = a-b$.
\end{Assumption}

	\begin{lemma}\label{lemma:uniqueness gamma}
	Under Assumptions 1 and 2, for every $\lambda \in [\lambda_b, \lambda_a]$ there exists a unique $\gamma>0$ such that $\askarg - \bidarg = a-b$.
	\end{lemma}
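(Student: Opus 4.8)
The plan is to leverage Assumption 2 for existence and then establish strict monotonicity of the map $\gamma \mapsto \askarg - \bidarg$ for uniqueness. Fix $\lambda \in [\lambda_b, \lambda_a]$. Assumption 2 directly gives the existence of some $\gamma > 0$ with $\askarg - \bidarg = a - b$, so only uniqueness remains to be shown.

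For uniqueness, I would argue that for fixed $\lambda$ the function $\gamma \mapsto \askarg - \bidarg$ is strictly increasing on $(0, +\infty)$. The key is the representation recalled in the footnote (and used in Section \ref{sec:example}): using \eqref{eq:ask} and \eqref{eq:bid}, the spread equals
\begin{equation*}
\askarg - \bidarg = \DF(e_{N})\cdot\left[(\mathrm{C})\int_\Omega \tilde{X}^{\mathrm{CDS}}\,d\psi_\gamma(\mathbb{Q}) + (\mathrm{C})\int_\Omega -\tilde{X}^{\mathrm{CDS}}\,d\psi_\gamma(\mathbb{Q})\right].
\end{equation*}
I would then invoke the standard fact (e.g. from the Choquet-integral representation, or directly from the quantile/distortion formula $(\mathrm{C})\int X\,d\psi(\mathbb{Q}) = \int_0^1 F_X^{-1}(u)\,d\psi(1-u)$ type identities) that for a non-degenerate random variable the difference of the upper and lower distorted expectations is strictly monotone in the distortion: if $\psi_{\gamma_1} \leq \psi_{\gamma_2}$ pointwise with strict inequality on $(0,1)$, and the family is increasing in the sense assumed in Section \ref{sec:bidAsk}, then the spread strictly increases. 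Concretely, since $\tilde{X}^{\mathrm{CDS}}$ (for fixed $\lambda > 0$) has a genuinely non-constant distribution under $\mathbb{Q}$ — the default time is continuously distributed with positive intensity, so the CDS payoff is not a.s. constant — the spread is strictly positive for every $\gamma > 0$ and strictly increasing in $\gamma$. A strictly increasing function attains the value $a - b$ at most once, giving uniqueness.

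The main obstacle will be making the strict monotonicity of $\gamma \mapsto \askarg - \bidarg$ rigorous from whatever hypotheses the paper has placed on the family $(\psi_\gamma)_{\gamma \geq 0}$. The excerpt only says the family is "increasing" and that $\rho_x(-X) \leq \rho_{x'}(-X)$ for $x \leq x'$, which gives monotonicity of the spread but not strictness; to get strictness one needs either that the distortions are strictly ordered on $(0,1)$ or some non-degeneracy of the payoff distribution, and I would need to spell out that the Poisson/continuous-default-time assumption supplies the latter. If the paper instead prefers to keep the argument short, the cleaner route is: combine Assumption 2 (existence) with the already-noted monotonicity "$\askarg - \bidarg$ is nondecreasing in $\gamma$" plus the strict version justified in the footnote, and simply cite Section \ref{sec:example} for the explicit leg-by-leg formula that exhibits the strict increase. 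I expect the actual proof in the paper to be of this short form, deferring the concrete spread formula to the worked example.

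In summary, the steps in order are: (i) fix $\lambda \in [\lambda_b,\lambda_a]$ and recall $\lambda_b,\lambda_a$ are well defined by Lemma \ref{lemma:lambda interval}; (ii) obtain existence of $\gamma>0$ with $\askarg-\bidarg=a-b$ directly from Assumption 2; (iii) show $\gamma\mapsto\askarg-\bidarg$ is strictly increasing on $(0,+\infty)$, using the Choquet/distortion representation of bid and ask together with the non-degeneracy of the CDS payoff distribution under the Poisson model; (iv) conclude that the value $a-b$ is attained at exactly one $\gamma$. Step (iii) is the crux and the only place where real work is needed.
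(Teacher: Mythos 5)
Your proposal is correct and takes essentially the same route as the paper: existence is read off from Assumption 2, and uniqueness follows from strict monotonicity of $\gamma\mapsto\askarg-\bidarg$ (the paper phrases this as a contradiction argument using that the ask is strictly increasing and the bid strictly decreasing in $\gamma$, which it asserts without further justification). If anything, you are more careful than the paper, since you correctly flag that the strictness of this monotonicity is the point requiring non-degeneracy of the payoff distribution or strict ordering of the distortions, a step the paper's proof simply takes for granted.
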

	\begin{proof}
	Fix $\lambda \in [\lambda_b, \lambda_a]$. By Assumption 2 there exists (at least) one $\gamma>0$ such that $\askarg - \bidarg = a-b$. Assume there exists $\gamma_*$ and $\gamma^*$ such that $\ask(\lambda,\gamma_*)-\bid(\lambda,\gamma_*)=\ask(\lambda,\gamma^*)-\bid(\lambda,\gamma^*)=a-b$, with $\gamma_*<\gamma^*$. The ask price is an increasing function of $\gamma$, while the opposite holds for the bid. Therefore, $\ask(\lambda,\gamma_*)<\ask(\lambda,\gamma^*)$ and $\bid(\lambda,\gamma_*)>\bid(\lambda,\gamma^*)$. Then $a-b=\ask(\lambda,\gamma_*)-\bid(\lambda,\gamma_*)<\ask(\lambda,\gamma^*)-\bid(\lambda,\gamma^*)=a-b$, contradiction.
	\end{proof}
	
\begin{lemma}\label{lemma:continuity gamma}
Under Assumptions 1 and 2, for every $\lambda \in [\lambda_b, \lambda_a]$ the function such that $\lambda\mapsto \gamma(\lambda)$, where $\mathrm{ask}(\lambda, \gamma(\lambda))-\mathrm{bid}(\lambda, \gamma(\lambda))=a-b$, is continuous.
\end{lemma}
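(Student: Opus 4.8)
The plan is to establish continuity of $\gamma(\cdot)$ as a consequence of the uniqueness obtained in Lemma \ref{lemma:uniqueness gamma} together with the joint continuity and strict monotonicity (in $\gamma$) of the spread function $S(\lambda,\gamma)\coloneqq\askarg-\bidarg$. First I would record that $S$ is jointly continuous on $[\lambda_b,\lambda_a]\times(0,+\infty)$: the bid and ask prices are given by Choquet integrals of the deferred CDS cashflows against $\psi_\gamma(\mathbb{Q})$, as in \eqref{eq:bid}--\eqref{eq:ask}, and since the survival function $\PS(t)=e^{-\lambda t}$ (in the relevant segment) depends continuously on $\lambda$, the distribution of $\tilde X^{\mathrm{CDS}}$ varies continuously with $\lambda$, and $\gamma\mapsto\psi_\gamma$ is an increasing continuous family of distortions; continuity of the Choquet integral in both arguments then follows. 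Next I would note, as already used in the proof of Lemma \ref{lemma:uniqueness gamma}, that for fixed $\lambda$ the map $\gamma\mapsto S(\lambda,\gamma)$ is strictly increasing, and by Lemma \ref{lemma:uniqueness gamma} the equation $S(\lambda,\gamma)=a-b$ has exactly one solution $\gamma(\lambda)$, so $\gamma(\cdot)$ is a well-defined function.

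For the continuity argument proper I would use a sequential characterization. Fix $\lambda_0\in[\lambda_b,\lambda_a]$ and take $\lambda_n\to\lambda_0$. The first substantive step is to show the sequence $(\gamma(\lambda_n))_n$ is bounded: if some subsequence diverged to $+\infty$, then along it $S(\lambda_n,\gamma(\lambda_n))$ would tend to the limiting spread computed under the distortion assigning one to every non-null set (the ``extreme'' distortion mentioned in the footnote to Assumption 2), which by that footnote strictly exceeds $a-b$ for $\lambda$ near $\lambda_0$ — contradicting $S(\lambda_n,\gamma(\lambda_n))=a-b$; one could alternatively impose the mild uniform bound directly. Similarly $(\gamma(\lambda_n))_n$ is bounded away from $0$, since $S(\lambda,0)=0<a-b$ and $S$ is continuous up to $\gamma=0$. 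Hence $(\gamma(\lambda_n))_n$ lies in a compact subinterval of $(0,+\infty)$.

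Now let $\gamma^\dagger$ be any subsequential limit of $(\gamma(\lambda_n))_n$, say $\gamma(\lambda_{n_k})\to\gamma^\dagger$. By joint continuity of $S$,
\begin{equation}
S(\lambda_0,\gamma^\dagger)=\lim_{k\to\infty}S(\lambda_{n_k},\gamma(\lambda_{n_k}))=a-b,
\end{equation}
so $\gamma^\dagger$ solves the defining equation at $\lambda_0$; by the uniqueness in Lemma \ref{lemma:uniqueness gamma}, $\gamma^\dagger=\gamma(\lambda_0)$. Since every subsequential limit of the bounded sequence $(\gamma(\lambda_n))_n$ equals $\gamma(\lambda_0)$, the whole sequence converges to $\gamma(\lambda_0)$, which gives continuity of $\gamma(\cdot)$ at $\lambda_0$; as $\lambda_0$ was arbitrary in $[\lambda_b,\lambda_a]$, the claim follows.

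The step I expect to be the main obstacle is the a priori boundedness of $(\gamma(\lambda_n))_n$ away from $+\infty$: this is where one genuinely needs the content of Assumption 2 (or the heuristic in its footnote) rather than mere continuity and monotonicity, since without ruling out $\gamma(\lambda_n)\to+\infty$ one cannot extract a convergent subsequence on which to pass to the limit. Everything else — joint continuity of the Choquet-integral-based prices and the subsequence-limit argument — is routine once that compactness is in hand.
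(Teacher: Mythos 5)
Your proposal is correct and follows essentially the same route as the paper's proof: a sequential argument establishing boundedness of $(\gamma(\lambda_n))_n$ by contradiction (using Assumption 2 together with strict monotonicity of the spread in $\gamma$), followed by identification of every subsequential limit via joint continuity and the uniqueness from Lemma \ref{lemma:uniqueness gamma}. The only difference is cosmetic: the paper first shows all subsequential limits coincide using monotonicity and then identifies the common limit, whereas you invoke uniqueness directly to pin down each subsequential limit, which slightly streamlines the same argument.
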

\begin{proof}
Fix $\bar{\lambda}$ in $[\lambda_b, \lambda_a]$ and let $(\lambda_n)_n$ be a sequence in $[\lambda_b, \lambda_a]$ that converges to $\bar{\lambda}$. We define $\phi(\lambda,\gamma)\coloneqq\askarg - \bidarg$. We proceed in steps.

Claim (i): The sequence $(\gamma(\lambda_n))_n$ is bounded. Say this is not the case. Then, there exists a subsequence $(\gamma(\lambda_{n_k}))_k$ that diverges to $+\infty$. $(\lambda_{n_k})_k$ converges to $\bar{\lambda}$, as subsequence of a convergent sequence, and $\phi$ is continuous in both arguments. Therefore, $\lim_k \phi(\bar{\lambda}_{n_k}, \gamma(\lambda_{n_k}))=\phi(\bar{\lambda}, +\infty)=a-b$, as $\phi(\bar{\lambda}_{n_k}, \gamma(\lambda_{n_k}))$  always equals $a-b$, by construction. By Assumption 2 there exists $\bar{\gamma}> 0$ such that $\phi(\bar{\lambda}, \bar{\gamma})=a-b$. Therefore, as $\phi$ is increasing in its second argument, it follows that $a-b = \phi(\bar{\lambda}, \bar{\gamma})<\phi(\bar{\lambda}, +\infty)=a-b$, contradiction.

Claim (ii): The sequence $(\gamma(\lambda_n))_n$ has limit. As this sequence is bounded, it admits a convergent subsequence. Say there are two subsequences, namely $(\gamma(\lambda_{n_k}))_k$ and $(\gamma(\lambda_{n_h}))_h$, that converge to $\gamma_*$ and $\gamma^*$, respectively, where $\gamma_*<\gamma^*$. Then $(\lambda_{n_k})_k$ and $(\lambda_{n_h})_h$ both converge to $\bar{\lambda}$, as subsequencies of the same convergent sequence. So we obtain that $a-b = \lim_k\phi(\lambda_{n_k},\gamma(\lambda_{n_k}))=\phi(\bar{\lambda},\gamma_*) <\phi(\bar{\lambda},\gamma^*) = \lim_h \phi(\lambda_{n_h},\gamma(\lambda_{n_h}))=a-b$, contradiction (the first and the last equalities follow from the definitions of $(\lambda_{n_k})_k$ and $(\lambda_{n_h})_h$, respectively, the second and the penultimate equalities from the continuity of $\phi$, while the inequality from $\phi$ being increasing in its second argument). Then, every convergent subsequence of $(\gamma(\lambda_n))_n$ has the same limit. As $(\gamma(\lambda_n))_n$ is bounded, then it admits limit.\footnote{Here, we have used the following elementary result: if a bounded real sequence has the property that all its convergent subsequences converge to the same real limit, then the sequence itself also converges to it; see \cite[Ex. 2.5.5]{Abbott}.}

Claim (iii): The limit of $(\gamma(\lambda_n))_n$ is $\gamma(\bar{\lambda})$. Denote $\lim_n \gamma(\lambda_n)$ as $\bar{\gamma}$. Observe that $\phi$ is continuous in both arguments. The sequence $(\phi(\lambda_n, \gamma(\lambda_n)))_n$ is constant by construction, i.e., it always equals $a-b$. Therefore, it converges to $a-b$. Its limit is $\phi(\bar{\lambda},\bar{\gamma})$, as $\phi$ is continuous. Due to Lemma \ref{lemma:uniqueness gamma}, there exists a unique $\gamma(\bar{\lambda})$ such that $\phi(\bar{\lambda}, \gamma(\bar{\lambda}))=a-b$. So, $\bar{\gamma}=\gamma(\bar{\lambda})$.
\end{proof}

We now can, under Assumptions 1 and 2, use Lemmas \ref{lemma:lambda interval}, \ref{lemma:uniqueness gamma} and \ref{lemma:continuity gamma} to prove that the calibration problem \eqref{eq:system} has a unique solution. Therefore, Theorem \ref{prop:existence} guarantees that, under the hypotheses considered, risk-neutral default probabilities can be implied in a unique manner from bid and ask CDS quotes without relying on their mid counterparts.

\begin{theorem}\label{prop:existence}
		Under Assumptions 1 and 2, there exists a solution of the constrained non-linear system \eqref{eq:system}, and it is unique.
	\end{theorem}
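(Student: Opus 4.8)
The plan is to collapse the two-equation constrained system \eqref{eq:system}--\eqref{eq:constraint2} to a single scalar equation on the interval $[\lambda_b,\lambda_a]$ furnished by Lemma~\ref{lemma:lambda interval}, and then solve that scalar equation by the intermediate value theorem. First I would invoke Lemma~\ref{lemma:uniqueness gamma} to obtain, for each $\lambda\in[\lambda_b,\lambda_a]$, the unique $\gamma(\lambda)>0$ with $\mathrm{ask}(\lambda,\gamma(\lambda))-\mathrm{bid}(\lambda,\gamma(\lambda))=a-b$, and Lemma~\ref{lemma:continuity gamma} to record that $\lambda\mapsto\gamma(\lambda)$ is continuous on $[\lambda_b,\lambda_a]$. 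Along this curve the two equations of \eqref{eq:system} become equivalent: $\mathrm{bid}(\lambda,\gamma(\lambda))=b$ if and only if $\mathrm{ask}(\lambda,\gamma(\lambda))=b+(a-b)=a$. Hence it suffices to find $\lambda^{\star}\in(\lambda_b,\lambda_a)$ with $g(\lambda^{\star})=b$, where $g(\lambda):=\mathrm{bid}(\lambda,\gamma(\lambda))$; the value $\gamma^{\star}:=\gamma(\lambda^{\star})>0$ then automatically completes a solution of \eqref{eq:system}, and $\lambda^{\star}\in(\lambda_b,\lambda_a)$ together with the strict monotonicity of $\PV^{\text{CDS}}$ forces the strict constraint \eqref{eq:constraint2}.

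For existence I would check the sign of $g$ at the two endpoints. Since $g$ is the composition of the jointly continuous bid functional with the continuous map $\gamma(\cdot)$, it is continuous on $[\lambda_b,\lambda_a]$. By the way $[\lambda_b,\lambda_a]$ is constructed in Lemma~\ref{lemma:lambda interval}, and using that $\PV^{\text{CDS}}$ is continuous and strictly increasing, one has $\PV^{\text{CDS}}(\lambda_b)=b$ and $\PV^{\text{CDS}}(\lambda_a)=a$. Using the basic conic-finance inequalities $\mathrm{bid}(\lambda,\gamma)\le \PV^{\text{CDS}}(\lambda)\le \mathrm{ask}(\lambda,\gamma)$ — which are strict whenever $\gamma>0$ because $\tilde{X}^{\text{CDS}}$ is not $\mathbb{Q}$-a.s.\ constant, so the concave distortion $\psi_\gamma$ strictly widens the spread — I get $g(\lambda_b)=\mathrm{bid}(\lambda_b,\gamma(\lambda_b))<\PV^{\text{CDS}}(\lambda_b)=b$, and, from the curve identity $\mathrm{ask}(\lambda_a,\gamma(\lambda_a))=g(\lambda_a)+(a-b)$ combined with $\mathrm{ask}(\lambda_a,\gamma(\lambda_a))>\PV^{\text{CDS}}(\lambda_a)=a$, that $g(\lambda_a)>b$. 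The intermediate value theorem then produces $\lambda^{\star}\in(\lambda_b,\lambda_a)$ with $g(\lambda^{\star})=b$, and $(\lambda^{\star},\gamma(\lambda^{\star}))$ is the sought solution.

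For uniqueness, suppose $(\lambda_1,\gamma_1)$ and $(\lambda_2,\gamma_2)$ both solve \eqref{eq:system} subject to \eqref{eq:constraint2}. The constraint together with Lemma~\ref{lemma:lambda interval} places $\lambda_1,\lambda_2\in[\lambda_b,\lambda_a]$, and since each pair satisfies $\mathrm{ask}(\lambda_j,\gamma_j)-\mathrm{bid}(\lambda_j,\gamma_j)=a-b$, Lemma~\ref{lemma:uniqueness gamma} forces $\gamma_j=\gamma(\lambda_j)$. If $\lambda_1=\lambda_2$ we are done. Otherwise, say $\lambda_1<\lambda_2$; then from $\mathrm{ask}(\lambda_1,\gamma_1)=a=\mathrm{ask}(\lambda_2,\gamma_2)$ and strict monotonicity of the ask price (increasing in both $\lambda$ and $\gamma$) one deduces $\gamma_1>\gamma_2$, whereas from $\mathrm{bid}(\lambda_1,\gamma_1)=b=\mathrm{bid}(\lambda_2,\gamma_2)$ and strict monotonicity of the bid price (increasing in $\lambda$, decreasing in $\gamma$) one deduces $\gamma_1<\gamma_2$, a contradiction.

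The main obstacle I expect is not the topological part but justifying the strict inequalities that make $g$ genuinely change sign at the endpoints and that drive the uniqueness step: namely that, for $\gamma>0$, the distorted bid (respectively ask) of the CDS cashflow lies \emph{strictly} below (above) its risk-neutral value $\PV^{\text{CDS}}(\lambda)$, and that $\mathrm{bid}(\lambda,\gamma)$ and $\mathrm{ask}(\lambda,\gamma)$ are \emph{strictly} increasing in $\lambda$. The first rests on the non-degeneracy of the law of $\tilde{X}^{\text{CDS}}$ under $\mathbb{Q}$ (so that the strictly concave region of $\psi_\gamma$ is effective); the second is the monotonicity in $\lambda$ already used for $\PV^{\text{CDS}}$ in Lemma~\ref{lemma:lambda interval} and discussed around \eqref{eq:lambda1} and \eqref{lambda}, which holds for the coupon and $\lgd$ ranges relevant in practice. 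Once these monotonicity facts are in hand, everything else — continuity of $g$, the endpoint sign computation, and the appeal to the intermediate value theorem — is routine.
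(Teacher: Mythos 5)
Your proposal is correct and follows essentially the same route as the paper: reduce to the curve $\gamma(\lambda)$ from Lemmas~\ref{lemma:uniqueness gamma} and \ref{lemma:continuity gamma}, check the signs of $\mathrm{bid}(\lambda,\gamma(\lambda))-b$ at the endpoints $\lambda_b$ and $\lambda_a$ using $\mathrm{bid}<\PV^{\text{CDS}}<\mathrm{ask}$, and conclude by the intermediate value theorem. Your uniqueness step is in fact spelled out more carefully than the paper's one-line appeal to Lemma~\ref{lemma:uniqueness gamma}, since you make explicit the strict monotonicity of bid and ask in $\lambda$ and $\gamma$ needed to rule out two distinct crossings.
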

	\begin{proof}
	Consider the interval $[\lambda_b, \lambda_a]$ as per Lemma \ref{lemma:lambda interval}. There exists a unique $\gamma_b$ such that $\ask(\lambda_b,\gamma_b)-\bid(\lambda_b,\gamma_b)=a-b$. Observe that $\bid(\lambda_b,\gamma_b)<\PV^{\text{CDS}}(\lambda_b)=b$, so $b<\ask(\lambda_b,\gamma_b)<a$. 
	
Similarly, consider $\lambda_a$. There exists a unique $\gamma_a$ such that $\ask(\lambda_a,\gamma_a)-\bid(\lambda_a,\gamma_a)=a-b$. Because $a=\PV^{\text{CDS}}(\lambda_a)<\ask(\lambda_a,\gamma_a)$, it follows that $b<\bid(\lambda_a,\gamma_a)<a$.

The functions $\ask(\lambda, \gamma)$, $\bid(\lambda, \gamma)$, and -- see Lemma  \ref{lemma:continuity gamma} -- $\gamma(\lambda)$, are continuous in $\lambda$. Thus, there exists $\bar{\lambda}\in (\lambda_b, \lambda_a)$ and corresponding $\bar{\gamma}$ such that $\ask(\bar{\lambda},\bar{\gamma})=a$ and $\bid(\bar{\lambda},\bar{\gamma})=b$. By virtue of Lemma \ref{lemma:uniqueness gamma} the pair $(\bar{\lambda},\bar{\gamma})$ satisfying (\ref{eq:system}) is unique.
	\end{proof}
	
Note that to obtain the existence and uniqueness result of Theorem \ref{prop:existence} we have relied on the fact that, for each given  maturity, the model describing the risk-neutral default distribution has a single free parameter, i.e., the hazard rate corresponding to the maturity considered. Therefore, considering the distortion parameter related to that maturity as additional degree of freedom allows the calibration problem to be defined, up to the constraint, by two equations and two unknowns. If more complex models with additional parameters were to be used, then the problem should have been approached in a least-square sense, and the best  possible outcome would have been that of finding an unique minimum. This very favorable situation, however, would not necessarily guarantee observed market quotes to be matched by the model, and therefore would as well not guarantee the implied risk-neutral distribution to be the ``true'' one.

\subsection{A calibration example}\label{sec:example}
The simplest possible manner to specify model (\ref{lambda}) consists in defining it as a piecewise-constant function, as done in the ISDA CDS  standard model commonly used in practice, which is based on the approach of \cite{okaneTurnbull}. Note that there is no information available on the hazard rate level between CDS maturities. Therefore, these specifications provide the smallest possible set of assumptions concerning the default intensity process and is a common choice amongst financial practitioners.

To compute bid and ask prices, one would need to approximate Choquet integrals numerically. To do so, a simple approximation of $(\text{C}) \int_\Omega X \, d\mu$ can be performed, see \cite[Ch. 11.5]{generalizedMeasureTheory}, as follows. Given a partition of $\Omega$ as $\bigcup_{i=1}^M A_i$ choose, for every $i$, $x_i\in X(A_i)$. Let $\sigma$ denote a permutation of $\left\{1,\ldots,M\right\}$ such that $x_{\sigma(1)}\leq\ldots\leq x_{\sigma(M)}$. Then, $(\text{C}) \int_\Omega X \, d\mu$ can be then approximated as
\begin{equation}
\sum_{i=1}^M(x_{\sigma(i)}-x_{\sigma(i-1)})\cdot \mu\left( \bigcup_{k=i}^M A_k\right),
\end{equation}
where $x_0\coloneqq 0$. In the case of a CDS, one can then set a grid (for instance, daily for simplicity),   namely $A_1\coloneqq\left\{\tau\in[0,d_1]\right\},\ldots,A_M\coloneqq\left\{\tau\in[d_{M-1},d_{M}]\right\}$, where $M$ denotes the total number of points (i.e., dates) in the grid, and set $x_i:=\tilde{X}^\text{CDS}|_{\tau=d_i}$ (recall that, using the notation introduced in Section \ref{sec:twoPriceEconomy}, the superscript tilde indicates that cashflows are deferred at the maturity of the CDS contract considered).

We recall, see \cite{eberlein2012}, that the bid and ask prices of a contingent claim $X$ can be calculated as $\bid(X)=\bid(X^+)-\ask(X^-)$ and $\ask(X)=\ask(X^+)-\bid(X^-)$, respectively, where the $X^+$ ($X^-$) denotes the positive (negative) part of $X$. We denote with $\tilde{X}^{\text{prot}}$ ($\tilde{X}^{\text{prem}}$) the protection (premium) leg of  $\tilde{X}^{\text{CDS}}$. From (\ref{eq:pv}), and by noting that with our conventions $X^+$ coincides with $\tilde{X}^{\text{prot}}$ and $X^-$ with $\tilde{X}^{\text{prem}}$, it follows that $\bid(\tilde{X}^{\text{CDS}})=-\ask(-\tilde{X}^{\text{prot}})-\ask(\tilde{X}^{\text{prem}})$ and that $\ask(\tilde{X}^{\text{CDS}})=\ask(\tilde{X}^{\text{prot}})+\ask(-\tilde{X}^{\text{prem}})$. Therefore, in principle it is sufficient to separately calculate the ask prices of the (signed) CDS legs only.

\medskip

As an example, we consider the specifications of the standard ISDA model, i.e., we assume a piecewise-constant hazard rate function. We take into account a set of market quotes for a BBB European financial institution with maturities 6 months and 1, 2, 3, 4, 5, 7 and 10 years, respectively, as of 13 February 2020. The recovery rate equals 40\%, and the coupon 1\%. Discounting performed with OIS EUR curve.

Figure \ref{fig:premia} represents the bid and ask quoted upfront premia, expressed per unit of notional, while Figure \ref{fig:calibrationError} the aggregated calibration errors, i.e., each value represents the sum of the bid and ask calibration errors, respectively.

\begin{figure}[H]
\begin{center}
   \begin{subfigure}{.5\linewidth}
     \centering
     \includegraphics[scale=0.375]{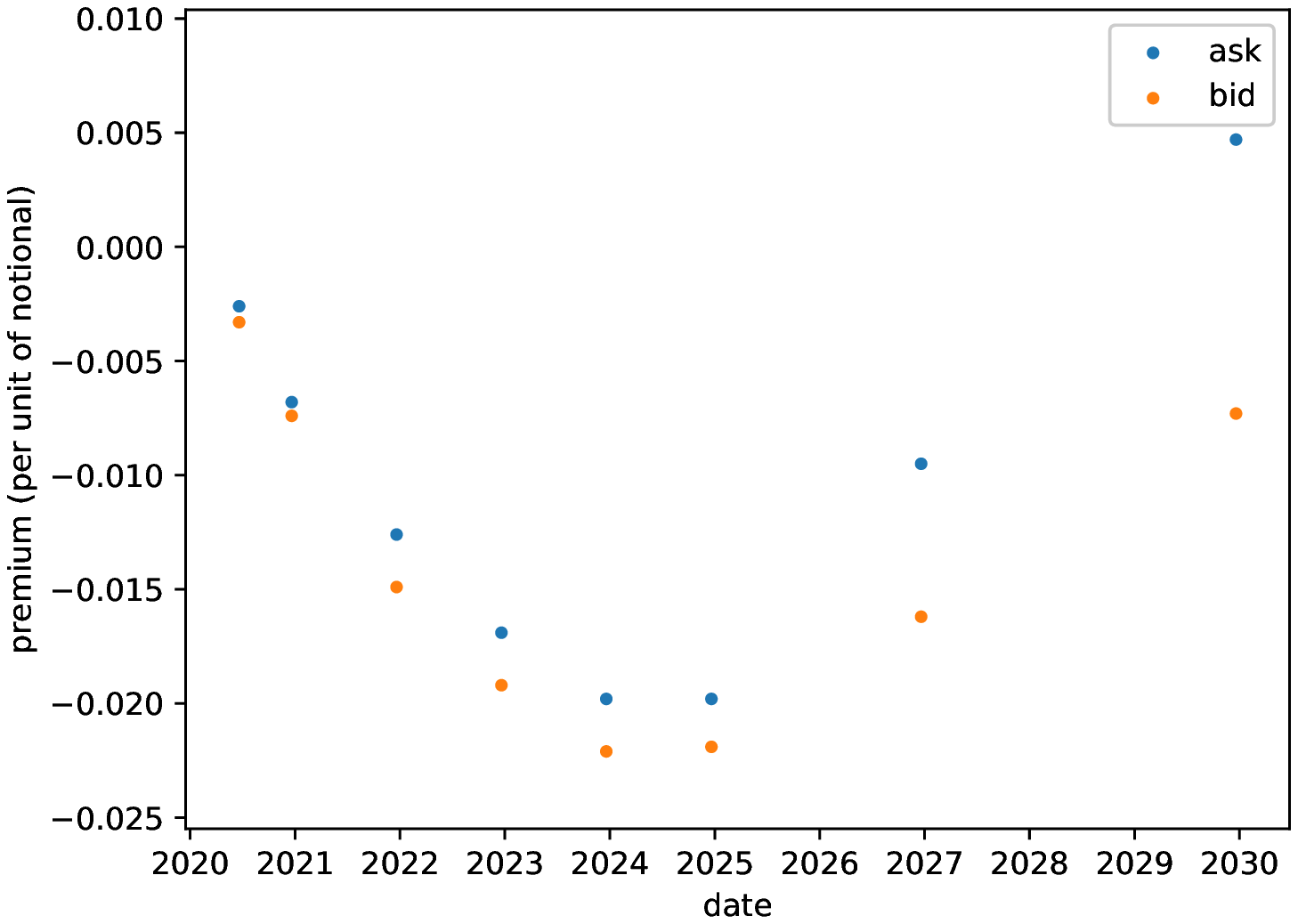}
     \caption{}\label{fig:premia}
   \end{subfigure}\hspace*{-0.5cm}
   \begin{subfigure}{.5\linewidth}
     \centering
     \includegraphics[scale=0.375]{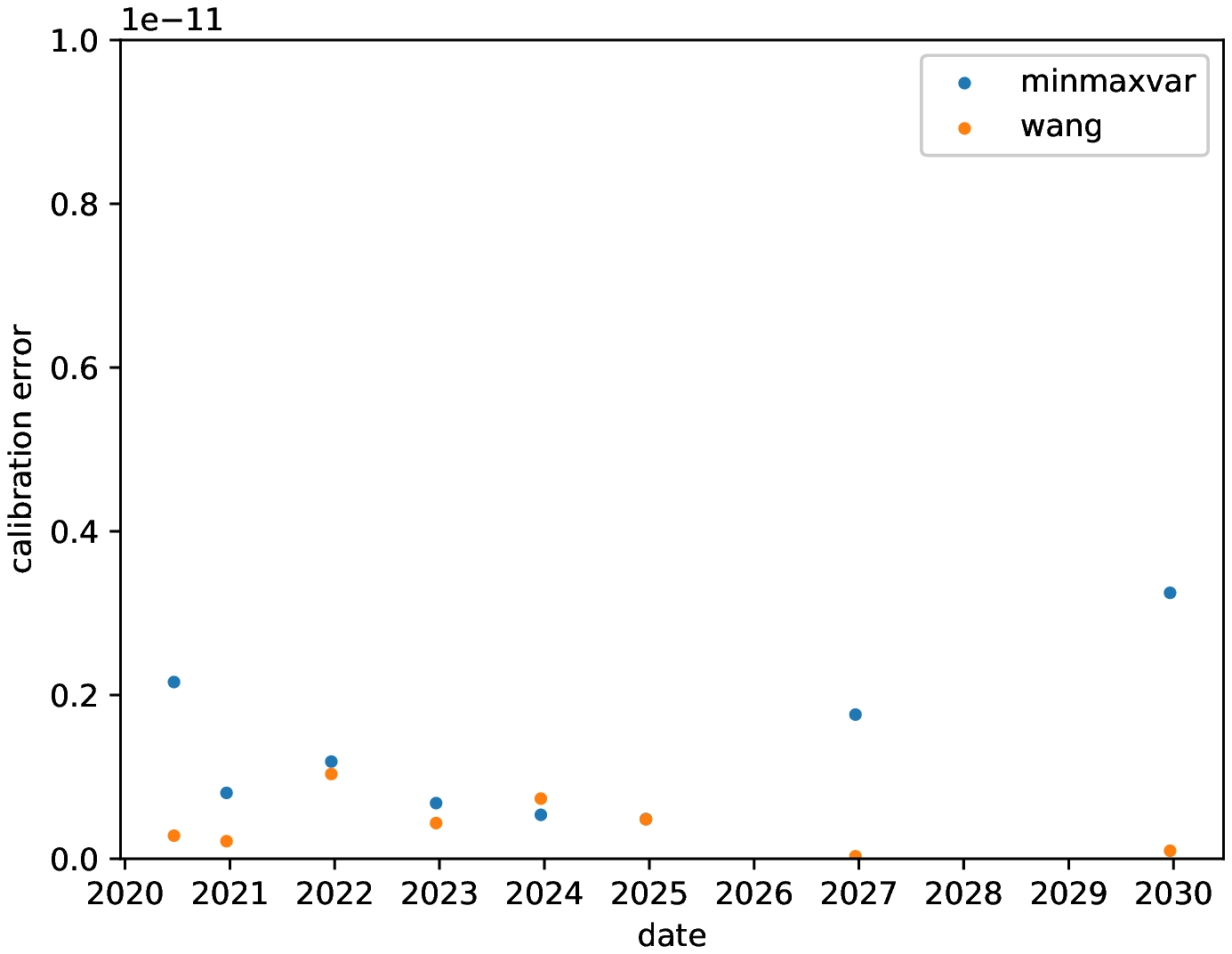}
     \caption{}\label{fig:calibrationError}
   \end{subfigure}
   \caption{Bid and ask CDS upfront premia used in the calibration example, panel (a), and calibration errors, panel (b). In particular, the calibration errors of panel (b) represent aggregated figures, i.e., each point corresponds to the sum of the calibration error for the bid quote and of that of the related ask quote.}
   \end{center}
\end{figure}

In this example we consider two common choices to define the family of distortion function, i.e., the \emph{minmaxvar} distortion \cite{cherny2009}, defined via
\begin{equation}\label{eq:minmaxvar}
\psi_\gamma(x)\coloneqq1-\left(1-x^{\frac{1}{1+\gamma}}\right)^{1+\gamma},
\end{equation}
and the \emph{Wang} distortion \cite{wang}, defined by setting
\begin{equation}\label{eq:wang}
\psi_\gamma(x)\coloneqq\Phi\left(\Phi^{-1}(x) + \gamma\right),
\end{equation}
with $\Phi(\,\cdot\,)$ denoting the cumulative distribution function of a standard normal random variable; in both \eqref{eq:minmaxvar} and \eqref{eq:wang} it is assumed that $x\in[0,1]$ (in the case of the latter, right and left limit should be considered for 0 and 1, respectively) and that $\gamma\geq 0$. Other examples of families of distortion functions are outlined, for instance, in \cite[Ch. 4.7]{bookConicFinance} and in \cite[Ch. 4.6]{follmer}. For each of the two choices we have made in terms of the distortion function, Figure \ref{fig:lambda} shows the piecewise-constant hazard rate function, while Figure \ref{fig:gamma} the linearly interpolated distortion parameter. Note that for each of the two choices of the distortion function we have made, the minimum of the $\gamma$ parameter in Figure \ref{fig:gamma} lies in proximity of the 5Y CDS, where the latter is usually the most liquid maturity. We also note how the pattern of the implied distortion parameter in Figure \ref{fig:gamma} follows that of the (relative) bid-ask CDS premium spread available in Table \ref{tab:premia} (last column therein). 
\begin{figure}[H]
\begin{center}
   \begin{subfigure}{.5\linewidth}
     \centering
     \includegraphics[scale=0.375]{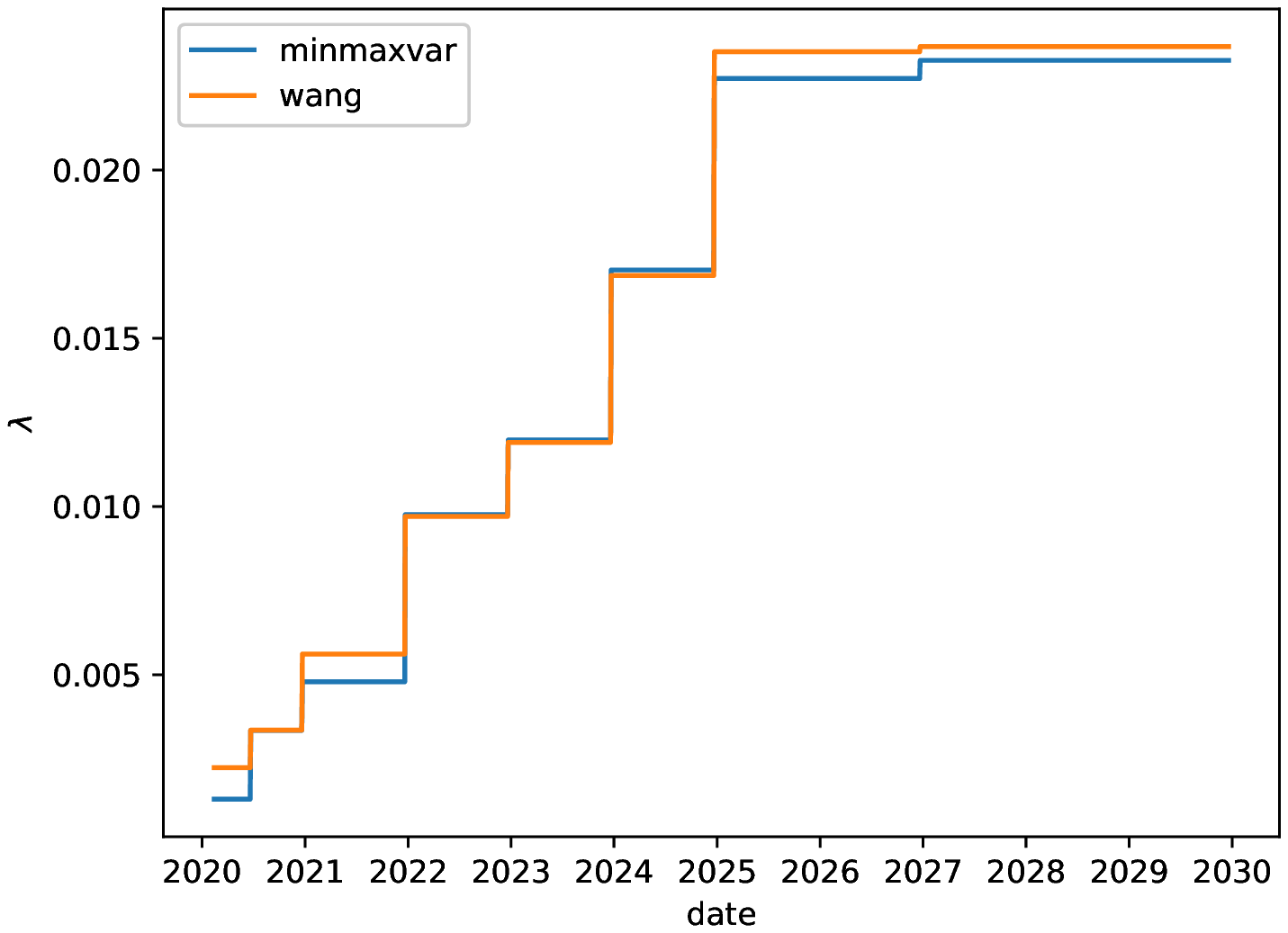}
     \caption{}\label{fig:lambda}
   \end{subfigure}\hspace*{-0.5cm}
   \begin{subfigure}{.5\linewidth}
    \centering
     \includegraphics[scale=0.375]{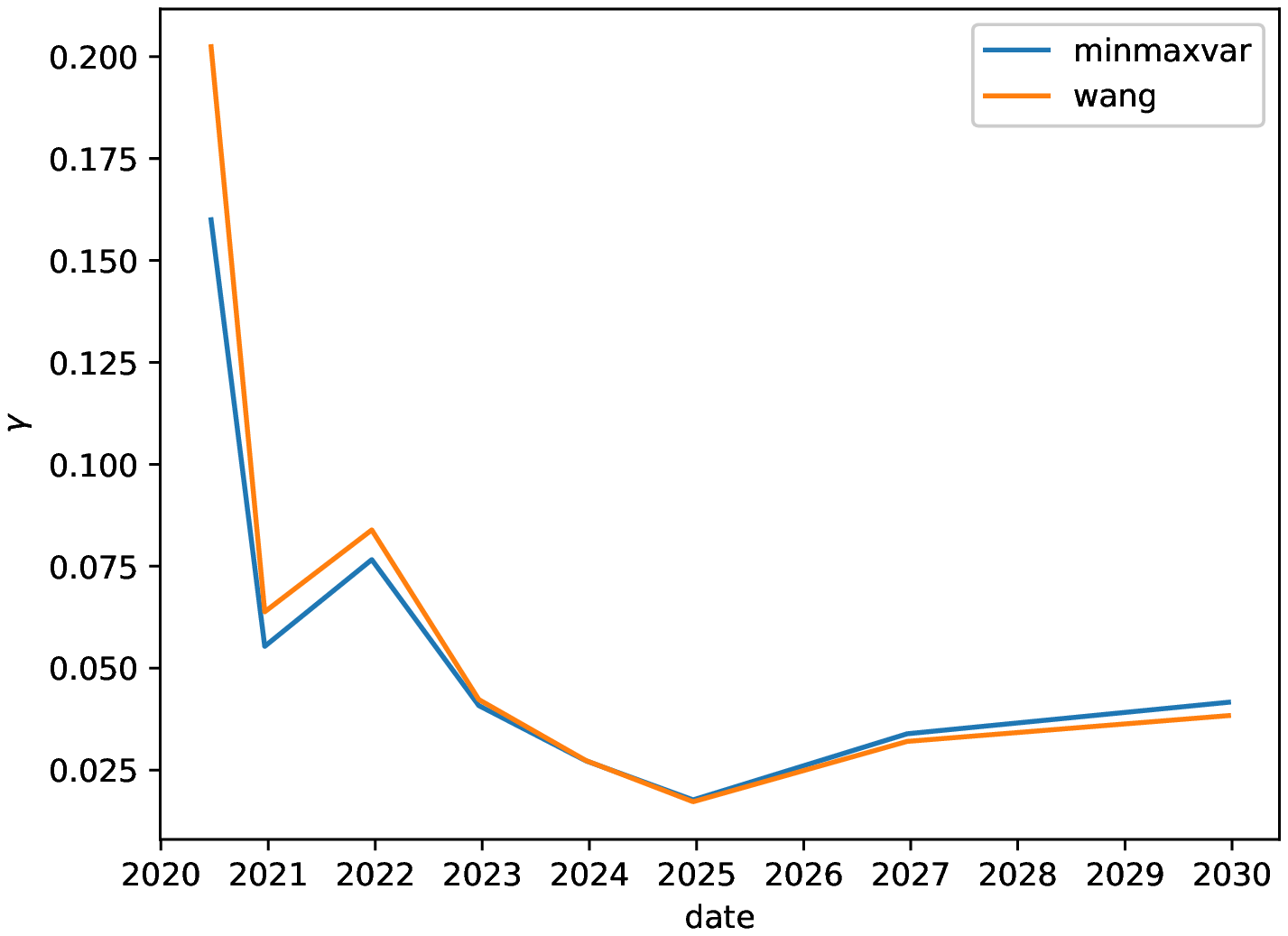}
     \caption{}\label{fig:gamma}
   \end{subfigure}
   \end{center}
   \caption{Implied parameters provided as a result of the calibration procedure to the bid and ask CDS quotes of Figure \ref{fig:premia}: hazard rates (i.e., $\lambda$) are depicted in panel (a), while distortion parameters (i.e., $\gamma$) in panel (b).}
\end{figure}
Note that, as illustrated by Figures \ref{fig:lambda} and \ref{fig:gamma} results obtained using the minmaxvar and Wang transforms are very similar, indicating little model risk. For completeness, the CDS input quotes of Figure \ref{fig:premia}, the implied hazard rates of Figure \ref{fig:lambda} and the implied distortion parameters of Figure \ref{fig:gamma} have been reported in \ref{sec:tables}; see Tables \ref{tab:premia}, \ref{tab:lambdas} and \ref{tab:gammas}, respectively.

\section{Conclusion}\label{sec:conclusion}
In this article we have considered the problem of calibrating a CDS model to the available bid and ask quotes within the conic finance paradigm of \cite{cherny}. In particular, in the context of reduced-form models, we have considered the default time as modeled by a Poisson process. The bid-ask calibration problem requires to iteratively solve a constrained non-linear system in two equations and two unknowns. We have showed that, under reasonable assumption for practical purposes, the calibration problem admits a unique solution. We have as well illustrated, with a practical example based on real market data, how the calibration to bid and ask CDS quotes works under the specifications of the standard CDS ISDA model and by considering two different choices for the the distortion function. In both the cases considered, as expected from the theory, the model could be calibrated to exactly match the observed market quotes. Despite our work outlined in Section \ref{sec:twoPriceEconomy} is specific to CDSs, the fact that financial instruments trade neither at the risk-neutral nor at the mid price apply to all contingent claims. Therefore, being able to fit valuation models solely to bid and ask quotes in such a way that risk-neutral parameters are implied as a result of the calibration routines is a desirable model feature that allows to drop the common assumption of equating risk-neutral and mid prices without additional ones being added. Hence, investigating how to calibrate models to bid and ask quotes without relying on mid quote approximations is a field on which further research is encouraged.
\newpage
\clearpage
\appendix

\section{A remark on  the monotonicity of CDS prices}\label{sec:remark}
We consider here the $i$\textsuperscript{th} CDS outlined in Section \ref{dynamicsOfTheSurvivalProbabilities}, i.e., the one maturing at $e_{N_i}$, and we denote with $N(i)$ the number of coupon periods related to it.

From \eqref{eq:pvPrem}, the present value of its premium leg can be rewritten as
\begin{equation}\label{eq:pvPrem2}
C\sum_{j=1}^{N(i)} \left(\DF(t_j)\cdot \Delta_j \cdot  \PS(e_j)+\mathbb{E}^{\mathbb{Q}}\left(\DF(\tau) \cdot \Delta(s_j, \tau) \cdot  \mathds{1}_{\left\{s_j\leq \tau\leq e_j\right\}}\right)\right).
\end{equation}
We define $j(i)\coloneqq \min\left\{j: e_j> e_{N_{i-1}},\, 1\leq j\leq N(i)\right\}$, with the convention that $j(i)=1$ if $i=1$. If $\lambda_i$ increases, from \eqref{PSlambda} and \eqref{lambda} it follows that $\PS(e_j)$ strictly decreases for each $j\geq j(i)$, leaving the others, if any, unchanged.
We can also rewrite the present value of the protection leg, see \eqref{eq:pvProt}, minus the accrual payments in \eqref{eq:pvPrem2}, as
\begin{align}
&\!\!\!\!\!\!\!\!\!\!\!\!\!\sum_{j=1}^{N(i)} \mathbb{E}^{\mathbb{Q}}\left(\DF(\tau)\cdot\left(\lgd- C\cdot\Delta(s_j, \tau)\right) \cdot  \mathds{1}_{\left\{\max(t_p,s_j)\leq \tau\leq e_j\right\}}\right)\nonumber \\
&\ \ \ \ \ \ \ \ \ \ \ \ \ \ \ \ \ \ \ \ \ \ \ \ \ - \mathbb{E}^{\mathbb{Q}}\left(\DF(\tau)\cdot C\cdot\Delta(s_1, \tau) \cdot  \mathds{1}_{\left\{0\leq \tau\leq t_p\right\}}\right),\label{eq:pvProt2}
\end{align}
due to $s_j=e_{j-1}$ whenever $j>1$. 

If $i>1$, when $\lambda_i$ increases then $\mathbb{Q}(\max(t_p,s_j)\leq \tau\leq e_j)$ strictly increases for each $j\geq j(i)$, leaving the other probabilities with $j<j(i)$, as well as $\mathbb{Q}(0\leq \tau\leq t_p)$, unchanged. Thus, if the condition
\begin{equation}\label{eq:condition}
\lgd>C\cdot\max_{j(i)\leq j\leq N(i)}\Delta(s_j, e_j)
\end{equation}
holds, then \eqref{eq:pvProt2} strictly increases if $\lambda_i$ increases.

In practice, condition \eqref{eq:condition} is verified for usual values of $\lgd$ and $C$: for instance, if the often-standard value for $\lgd$ of $60\%$ is chosen and $C=5\%$, then the right-hand side of \eqref{eq:condition} would be equal, up to day-count rounding, to $5\%\cdot0.25=1.25\%$, due to the quarterly payments of each CDS contract. A graphical illustration is provided in Figure \ref{fig:PV6}.
\begin{figure}[H]
\centering
     \includegraphics[scale=0.375]{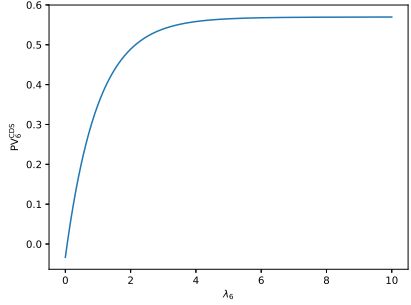}
      \caption{Present value of the 6\textsuperscript{th} CDS used for the calibration example provided in Section \ref{sec:example} (5Y maturity) with risk-neutral default probabilities implied via the minmaxvar distortion, as a function of $\lambda_6$. Notional assumed unitary.}
      \label{fig:PV6}
\end{figure}

Note that, when $i=1$, the second summand in \eqref{eq:pvProt2} is, in general, negligible. This is because the event $\left\{0\leq \tau\leq t_p\right\}$ means observing a default between the valuation date and the protection start date of the CDS, where the latter is usually one day after the former. With a good approximation $\DF(\tau)$ equals 1, as when $0\leq \tau\leq t_p$ the year fraction between the valuation date and the default time is almost zero. Further, we can bound $\Delta(s_1,\tau)$ from above, up to day-count rounding, with 0.25. Thus, an approximate upper bound for $\mathbb{E}^{\mathbb{Q}}\left(\DF(\tau)\cdot C\cdot\Delta(s_1, \tau) \cdot  \mathds{1}_{\left\{0\leq \tau\leq t_p\right\}}\right)$ is given by $ C\cdot 0.25\cdot\mathbb{Q}\left( 0\leq\tau\leq t_p\right)$. To give an idea about the magnitude of this term, if we consider as a simple case a piecewise-constant hazard rate functional form for  \eqref{lambda}, we than have that $\mathbb{Q}\left(0\leq \tau \leq t_p\right)= 1-e^{-\lambda_1\cdot\Delta(0,t_p)}$. If $\lambda_1$ increases by an amount $\delta$, by using a first order Taylor expansion we obtain that $\mathbb{Q}\left(0\leq \tau \leq t_p\right)$ increases by approximately $\delta\cdot\Delta(0,t_p)$. Thus, if $\lambda_1$ increases by $\delta$ then the change in $\mathbb{E}^{\mathbb{Q}}\left(\DF(\tau)\cdot C\cdot\Delta(s_1, \tau) \cdot  \mathds{1}_{\left\{0\leq \tau\leq t_p\right\}}\right)$ is approximately bounded from above by $C\cdot0.25\cdot \delta\cdot\Delta(0,t_p)$. Again, assume $C$ equals $5\%$ and that $t_p$ occurs one day after the valuation date. Using the Act/360 day-count convention we obtain that this amount equals $1.25\%\cdot\delta\cdot\frac{1}{360}$, which is negligible when $\delta$ not too large; see Figure \ref{fig:PV1} for an example.

When $i=1$, if $\lambda_1=0$ then the present value of the protection leg would be zero, making the value of the contract negative. When $\lambda_1$ increases, the present value of the contract increases as well, and for $\lambda_1$ large enough it would reach a positive value. However, when $\lambda_1$ diverges to $+\infty$, then a default would occur while the contract is being signed, which would make the value of the contract drop. Therefore, the monotonicity would be guaranteed, when $i=1$ and when usual coupon and $\lgd$ amounts are considered, on an interval $[0,\tilde{\lambda}_1]$, which is usually wide enough for practical applications. This is illustrated in Figure \ref{fig:PV1_extreme}.
\begin{figure}[H]
\begin{center}
   \begin{subfigure}{.5\linewidth}
     \centering
     \includegraphics[scale=0.375]{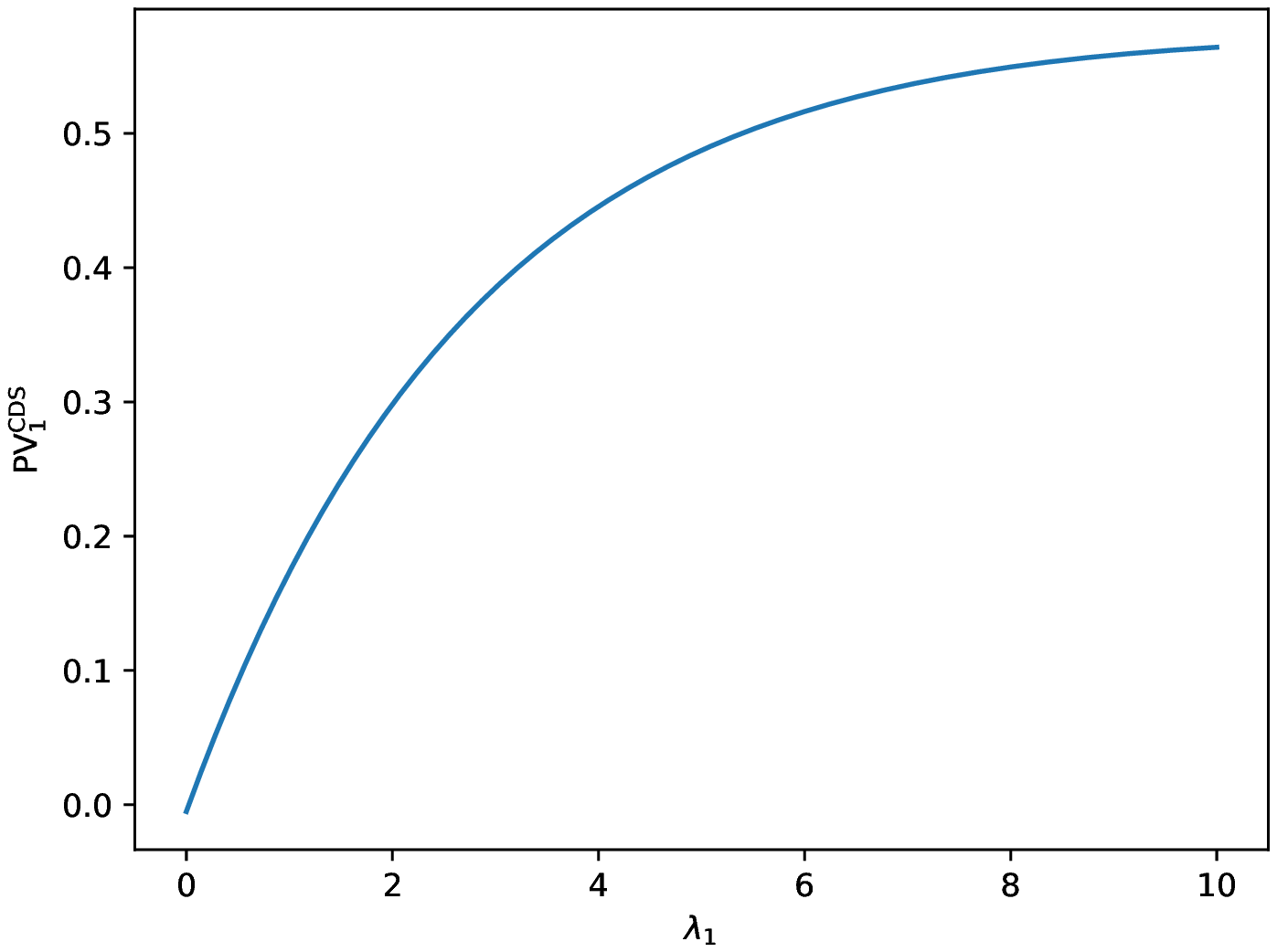}
     \caption{}\label{fig:PV1}
   \end{subfigure}\hspace*{-0.5cm}
   \begin{subfigure}{.5\linewidth}
    \centering
     \includegraphics[scale=0.375]{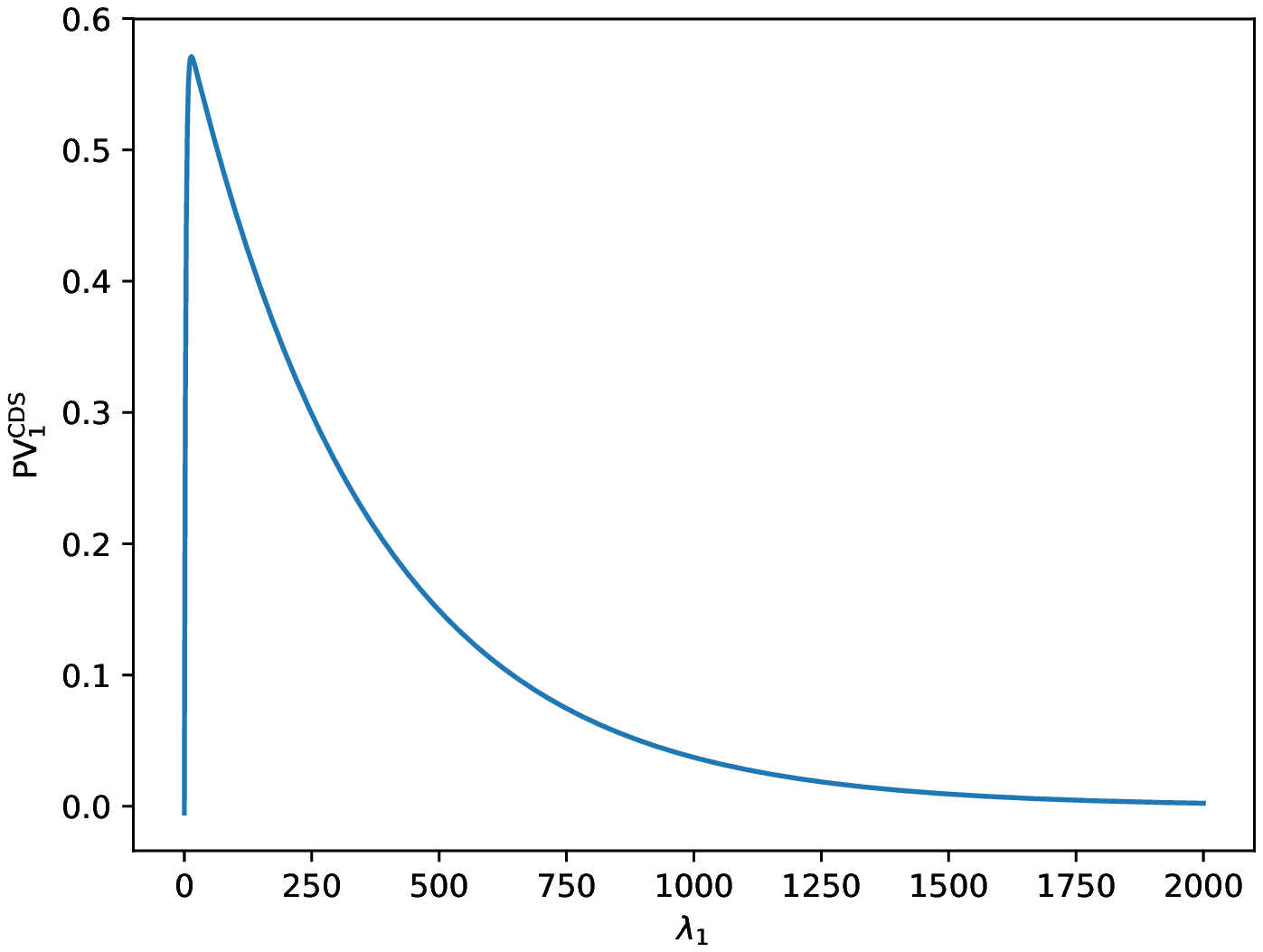}
     \caption{}\label{fig:PV1_extreme}
   \end{subfigure}
   \end{center}
   \caption{Present value of the first CDS used for the calibration example provided in Section \ref{sec:example} (6M maturity) with risk-neutral default probabilities implied via the minmaxvar distortion, as a function of $\lambda_1$ for the same range for $\lambda_1$ as in Figure \ref{fig:PV6}, panel (a), and when $\lambda_1$ diverges, panel (b). Notional assumed unitary.}
\end{figure}

Therefore, for parameters usually considered in practice, assuming that the price of the $i$\textsuperscript{th} CDS strictly increases when $\lambda_i$ increases (eventually within an interval that is large enough for practical applications in the case $i=1$) is a reasonable assumption, that we have used throughout the paper.

\section{Tables\label{sec:tables}}

\begin{table}[!htb]
\centering
\captionsetup{width=.8\textwidth}
\caption{Bid and ask CDS premia depicted in Figure \ref{fig:premia}, rounded to the basis-point digit. For comparison purposes, the mid CDS premia ($\UF^{\text{mid}}$), as well as the absolute value of the bid-ask CDS premium spreads with respect to mid CDS premia have been reported.\label{tab:premia}}
    \begin{tabular}{|c|c|c|c|c|}
    \hline
    \textbf{Tenor} & $\UF^{\text{bid}}_i$ & $\UF^{\text{ask}}_i$ & $\UF^{\text{mid}}_i$ & $|(\UF^{\text{ask}}_i-\UF^{\text{bid}}_i)/\UF^{\text{mid}}_i|$ \\
    \hline
    6M    & -0.0033 & -0.0026 & -0.0030 & 23.73\%\\
    1Y    & -0.0074 & -0.0068 &  -0.0071 & 8.45\%\\
    2Y    & -0.0149 & -0.0126 &  -0.0138 & 16.73\%\\
    3Y    & -0.0192 & -0.0169 &  -0.0181 & 12.74\%\\
    4Y    & -0.0221 & -0.0198 &  -0.0210 & 10.98\%\\
    5Y    & -0.0219 & -0.0198 &  -0.0209 & 10.07\%\\
    7Y    & -0.0162 & -0.0095 &  -0.0129 & 52.14\%\\
    10Y   & -0.0073 & 0.0047 &  -0.0013 & 932.08\%\\
    \hline
    \end{tabular}
\end{table}

\begin{table}[!htb]
\centering
\captionsetup{width=.9\textwidth}
    \caption{Calibrated parameters: hazard rates $\lambda$, panel (a), and distortion parameters $\gamma$, panel (b). Implied hazard rates and distortion parameters are depicted in Figures \ref{fig:lambda} and \ref{fig:gamma}, respectively.}
    \begin{subtable}{.5\linewidth}
      \centering        
    \begin{tabular}{|c|c|c|}
    \hline
    \textbf{Tenor} & $\lambda_i^{\text{minmaxvar}}$ & $\lambda_i^{\text{Wang}}$ \\
    \hline
    6M    & 0.001302 & 0.002235 \\
    1Y    & 0.003348 & 0.003353 \\
    2Y    & 0.004790 & 0.005615 \\
    3Y    & 0.009759 & 0.009705 \\
    4Y    & 0.011977 & 0.011906 \\
    5Y    & 0.017028 & 0.016865 \\
    7Y    & 0.022719 & 0.023514 \\
    10Y   & 0.023259 & 0.023671 \\
    \hline
    \end{tabular}%
     \caption{\label{tab:lambdas}}
    \end{subtable}%
    \begin{subtable}{.5\linewidth}
      \centering
        \begin{tabular}{|c|c|c|}
    \hline
      \textbf{Tenor} & $\gamma_i^{\text{minmaxvar}}$ & $\gamma_i^{\text{Wang}}$ \\
    \hline
    6M    & 0.159982 & 0.202429 \\
    1Y    & 0.055352 & 0.063821 \\
    2Y    & 0.076611 & 0.083899 \\
    3Y    & 0.040828 & 0.042300 \\
    4Y    & 0.027284 & 0.027351 \\
    5Y    & 0.017690 & 0.017241 \\
    7Y    & 0.033905 & 0.032033 \\
    10Y   & 0.041636 & 0.038361 \\
    \hline
    \end{tabular}%
    \caption{\label{tab:gammas}}
    \end{subtable} 
\end{table}

\section*{Acknowledgments}
\noindent We are grateful to Raoul Pietersz and Leo Kits for their valuable comments and suggestions. The opinions expressed in this paper are solely those of the authors and do not necessarily reflect those of their current and past employers.

\end{doublespace}

\clearpage

\bibliographystyle{apalike}
\bibliography{biblio}

\end{document}